\newcommand{\comments}[1]{}
\newcommand{\pcite}[1]{\citeauthor{#1}'s \citeyearpar{#1}}
\newtheorem{lemma}{Lemma}
\newtheorem{definition}{Definition}
\newtheorem{theorem}{Theorem}
\newcommand{\sN}{{\cal N}}
\begin{document}

\title{Ridge partial correlation screening for ultrahigh-dimensional data}

\author{Run Wang, An Nguyen, Somak Dutta, and Vivekananda Roy\\
 Department of Statistics, Iowa State University, USA}

\date{}

\maketitle

\begin{abstract}

  Variable selection in ultrahigh-dimensional linear regression is
  challenging due to its high computational cost. Therefore, a
  screening step is usually conducted before variable selection to
  significantly reduce the dimension. Here we propose a novel and
  simple screening method based on ordering the absolute sample ridge
  partial correlations. The proposed method takes into account not
  only the ridge regularized estimates of the regression coefficients
  but also the ridge regularized partial variances of the predictor
  variables providing sure screening property without strong
  assumptions on the marginal correlations. Simulation study and a
  real data analysis show that the proposed method has a competitive
  performance compared with the existing screening procedures. A
  publicly available software implementing the proposed screening
  accompanies the article.
\end{abstract}

\begin{keyword}
Feature selection; High dimensional ordinary least squares projection (HOLP); Large $p$ small $n$; $q$-exponential tail; Screening consistency; Sure independent screening
\end{keyword}

 \section{Introduction}
\label{sec:int}

In recent years, data sets with hundreds of thousands of variables are more and more common in diverse disciplines-- agronomy, biology, engineering, genomics, medicine, and the list goes on and on. Nevertheless, the number of variables which cause significant impact on the response are relatively small. For example, only a handful of genes are known to directly associate with Type 1 diabetes among a host of genes. Thus, variable selection methods have been attracting researchers' attention and in the past several years, we have witnessed a lot of research results appearing in this area. A widely used approach to perform variable selection and coefficient estimation simultaneously is by penalizing a loss function to get a shrinkage estimator of the coefficient. This class of methods include the popular lasso \citep{tibs:1996}, the elastic net
\citep{zou:hast:2005}, the SCAD \citep{fan:li:2001}, the adaptive lasso \citep{zou:2006}, the ridge regression \citep{hoer:kenn:1970} , and the bridge regression \citep{huan:horo:ma:2008}.

In ultrahigh-dimensional settings, the number of variables $p$ far exceeds the number of observations $n$. In such $p\gg n$ settings, the shrinkage methods may fail and the computational cost for the corresponding large-scale optimization problem is not ignorable. Thus, the feature screening methods, which aim to reduce the large dimensionality rapidly and efficiently, are performed in this setting before conducting a refined model selection analysis. Motivated by these, \cite{fan:lv:2008} introduced the sure independence screening (SIS) method in which the $n$ variables with the largest marginal (Pearson) correlations with the response survive after screening. The SIS method is then extended to generalized linear models \citep{fan:song:2010}, additive models \citep{fan:feng:song:2011}, and the proportional hazards model \citep{zhao:li:2012, gors:sche:2013}. Although SIS method can reduce the dimension rapidly to a manageable size, it is showed in \cite{wang:dutt:roy:2020} that SIS may fail in the presence of correlated predictors.
Based on \pcite{fan:lv:2008} work, several other correlation measures such as general correlation \citep{hall:mill:2009},
distance correlation \citep{li:zhon:zhu:2012}, rank correlation \citep{li:peng:zhan:zhu:2012}, tilted correlation \citep{cho:fryz:2012, lin:pang:2014} and quantile partial correlation \citep{ma:li:tsai:2017} have also been proposed as variable screening tools. Other variable screening methods include the traditional forward regression (FR) method \citep{wang:2009} \cite[see also][]{hao:zhan:2014} and high dimensional ordinary least squares projection (HOLP) \citep{wang:leng:2016}.

In this paper, we propose a novel variable screening method based on
the ridge partial correlation (RPC). This method is motivated by the
Bayesian iterated screening (BITS) of \cite{wang:dutt:roy:2021} who
introduced the ridge partial correlation to demonstrate contrasts of
BITS with FR. The proposed RPC screening method makes use of the
partial correlation information as well as the shrinkage effect due to
the ridge penalty. Given the response $y$ and the predictors
$(x_1, x_2, \dots, x_p)$, the partial correlation of $x_i$ and $y$ is
the correlation between $x_i$ and $y$ controlling the effect of all
other predictors $(x_1, \dots, x_{i-1}, x_{i+1}, x_p)$. This property
is inherited by RPC. Thus, the RPC screening will not likely select
the unimportant variables that are correlated with the important
variables, which is not the case with the SIS.  The implementation of
the RPC screening is straightforward and efficient by a parallel
computing algorithm proposed here. We prove that, asymptotically, RPC
screening has the same path as the HOLP and thus RPC has the sure
screening property. We show that, like the HOLP method, RPC has
screening consistency under the general $q$-exponential tail condition
on the errors and it does not require the marginal correlations for
the important variables to be bounded away from zero---an impractical
assumption used for showing screening consistency of SIS. Thus, RPC satisfies the two important aspects of a
successful screening method mentioned in \cite{wang:leng:2016}, namely
the computational efficiency and the screening consistency property
under flexible conditions.  Moreover,
although RPC has similarities with the HOLP, unlike HOLP, RPC takes
into account (ridge) partial variances of the variables, and numerical
examples show that RPC can lead to better performance than the HOLP
and other variable screening methods.

The rest of the paper is organized as follows. We describe the RPC
screening method in Section~\ref{sec:scree}. The screening consistency
property is discussed and established in
Section~\ref{sec:consistency}. In Section~\ref{sec:comfw}, we propose
a fast statistical computation algorithm for RPC
screening. Section~\ref{sec:simu} includes results from an extensive
simulation study. A real data set involving a mammalian eye disease
study is analyzed in Section~\ref{sec:data}. Section~\ref{sec:disc}
contains a short discussion. Several theoretical results, proofs of
the theorems and additional simulation results are given in the
supplement. The methodology proposed here is implemented in an
accompanying R package {\it rpc} \citep{r:rpc}.


 \section{Ridge partial correlation screening method}
\label{sec:scree}

\subsection{Linear regression model}

Consider the familiar linear regression model 
\begin{equation}
\label{eq:model}
   y =  \beta_0 + \beta_1 x_1 + \beta_2 x_2 + \ldots + \beta_p x_p + \epsilon
\end{equation}
where $y$ is the response, $x= (x_1, \dots, x_p)^{\top}$ is the random vector of predictors,  $\beta_0 \in \mathcal{R}$ is the intercept, $\beta = (\beta_1, \dots, \beta_p) \in\mathcal{R}^p$ is the vector of regression coefficients, and $\epsilon$ is the random error. We also write the model \eqref{eq:model} corresponding to $n$ realizations of $(x, y)$ as
\begin{equation}
  \label{eq:lm}
Y = \beta_0 1_n + X\beta + \varepsilon,  
\end{equation}
where $Y$ is the $n\times 1$ vector of response values, $1_n$ is the
$n\times1$ vector with all entries equal to 1, $X$ is the $n\times p$
design matrix, and $\varepsilon \in \mathcal{R}^n$ is the vector
consisting of identically and independently distributed residual
errors with mean zero and variance
$\sigma^2$. 
 
\subsection{The ridge partial correlation screening method}

Partial correlation measures the correlation between two variables
when conditioned on other variables. The marginal correlation based
screening methods like SIS \citep{fan:lv:2008} suffer from the problem
that unimportant variables those are marginally correlated with
important variables may get selected \citep{wang:dutt:roy:2020}. Replacing marginal correlation
with partial correlation a screening method helps to solve this
problem as partial correlation removes the effects of conditioned
variables. The (population) partial correlation between the $i$th
predictor variable $x_i$ and the response $y$ given the remaining
predictor variables is the negative of the $(1, i)$th entry of the
correlation matrix corresponding to $\tilde{\Sigma}^{-1}$ where
$\tilde{\Sigma}$ is the covariance matrix of the random vector $(y,
x)$. We assume that the $p$ columns of the matrix $X$ are centered and
scaled. If $p/n \to 0$, a consistent estimator of $\tilde{\Sigma}^{-1}$ is
given by
\begin{equation}
  \label{eq:covinv}
  n\begin{pmatrix}
\tilde{Y}^{\top}\tilde{Y} &\tilde{Y}^{\top}X\\
X^{\top}\tilde{Y} &X^{\top}X
\end{pmatrix}^{-1},
\end{equation}
where $\tilde{Y} = Y - \bar{Y}1_n,$ and $\bar{Y} = 1_n^{\top}Y/n$ is the average of the responses. However, if $p>n$, the inverse of the matrix in \eqref{eq:covinv} is not defined. This motivated us to consider the ridge regularized estimator of $\tilde{\Sigma}^{-1}$ given by $\bm{R}$, where
\begin{equation}
\label{eq:ridge}    
\bm{R} := n\begin{pmatrix}
\tilde{Y}^{\top}\tilde{Y} &\tilde{Y}^{\top}X\\
X^{\top}\tilde{Y} &X^{\top}X + \lambda I
\end{pmatrix}^{-1},
\end{equation}
and $I$ is the identity matrix. Note that, in \eqref{eq:ridge}, the predictor variables have been penalized while the response variable is not.
We now describe our proposed screening method. This method uses the sample ridge partial correlation between $x_i$ and $y$ as defined below.
\begin{definition}
\label{def:rpc}
For any $i$, the sample ridge partial correlation between $y$ and $x_i$ with ridge penalty $\lambda$ is given by
\begin{equation}
  \label{eq:rpc}
R_{i,\lambda} = -v_{iy, \lambda}/\{v_{i, \lambda}\times v_{y,
    \lambda}\}^{1/2},  
\end{equation}
 where $v_{iy, \lambda}$ is the $(1,i+1)$th
element of $\bm{R}$, $v_{i, \lambda}$ is the $(i+1, i+1)$th diagonal
element of $\bm{R}$, and $v_{y, \lambda}$ is the $(1,1)th$ element of
$\bm{R}$.
\end{definition}

For variable screening, we rank the absolute sample ridge partial
correlations and select the $K$ variables with the largest (absolute) ridge
partial correlations. More specifically, let $K$ be the number of
variables survived after screening. Then, given the ridge penalty $\lambda$, we
choose the submodel $\gamma^{K} \subseteq\{1,2,\ldots,p\}$ as
\[\gamma^{K} = \{i: |R_{i, \lambda}| \text{ is among the largest $K$ of all $|R_{i, \lambda}|$'s}\}.\]
Note that, \pcite{wang:leng:2016} HOLP procedure chooses the variables
based on the ridge estimator of $\beta_i$'s, that is, based on
$|v_{iy,\lambda}|$ values. Whereas, the ridge partial correlation also
takes into account the (ridge) partial variances $v_{i,\lambda}$'s of
the $i$th variable given the rest of the predictor variables and the
response. Thus, between two candidates with the same (absolute ridge)
regression coefficient estimate, the one with the smaller partial
variance is preferred. In different simulation examples in
Section~\ref{sec:simu}, we demonstrate that RPC can perform better
than HOLP.
\subsection{Screening consistency}

\label{sec:consistency}

In order to state the assumptions and the theoretical results, we use
the following notations. For two real sequences $\{a_n\}$ and
$\{b_n\}$, $a_n \sim b_n$ means ${a_n}/{b_n}\rightarrow c$ for some
constant $c>0$; $a_n \succeq b_n$ (or $b_n \preceq a_n$) means
$b_n = O(a_n)$; $a_n \succ b_n$ (or $b_n \prec a_n$) means
$b_n = o(a_n)$. Also, for any matrix $A$, let $\alpha_{min} (A)$ and
$\alpha_{max}(A)$ denote its minimum and maximum eigenvalues,
respectively, and let $\alpha_{min}^{*}(A)$ be its minimum nonzero
eigenvalue. Finally, for a positive definite matrix $A$, let
$ \text{cond}(A) = \alpha_{max}(A)/\alpha_{min} (A)$ denotes the condition number of $A$.



Recall that $X$ denotes the design matrix. Assume that the rows of $X$
are identically and independently distributed with zero mean vector
and covariance matrix $\Sigma$. That is, $\Sigma = $ cov $(x)$ is the
covariance matrix of the predictors. Define $Z = X\Sigma^{-1/2}$ and
$z = \Sigma^{-1/2} x$. For simplicity, as in \cite{wang:leng:2016}, we
assume that all diagonal entries of $\Sigma$ are equal to one, that
is, $\Sigma$ is the correlation matrix.

Performance of a screening method crucially depends on the tail
behavior of the errors $\varepsilon$. Screening consistency of HOLP
and BITS has been established in \cite{wang:leng:2016} and
\cite{wang:dutt:roy:2021}, respectively under the below mentioned general
$q-$exponential tail condition of the errors. 
\begin{definition}[$q$-exponential tail condition]
 \label{def:qexptail}
A zero-mean distribution $F$ is said to have $q$-exponential tail, if there exists a function $q:[0,\infty)\to \mathbb{R}$ such that for any $N\geq 1,$ $\eta_1,\ldots,\eta_N \stackrel{iid}{\sim} F$, $\ell\in{\cal R}^N$ with $\|\ell\| = 1$, and $\zeta > 0$ we have $P\left(\left|\ell^\top \eta\right| > \zeta\right)\leq\exp\{1-q(\zeta)\}$ where $\eta = (\eta_1,\ldots,\eta_N)^{\top}.$
\end{definition}
As shown in \cite{vers:2012}, for the standard normal distribution,
$q(\zeta) = \zeta^2/2,$ when $F$ is sub-Gaussian
$q(\zeta) = c_F\zeta^2$ for some constant $c_F > 0$ depending on $F,$
and when $F$ is sub-exponential distribution $q(\zeta) = c'_F\zeta$
for some constant $c'_F > 0$ depending on $F.$ Finally, if only first
$2k$ moments of $F$ are finite, then
$q(\zeta) - 2k\log(\zeta) = O(1).$ 

For a subset $\gamma \subset \{1, 2, \dots, p\}$, let $|\gamma|$ be
the cardinality of $\gamma$ and $X_\gamma$ be the $n \times |\gamma|$
submatrix of $X$ consisting of the columns of $X$ corresponding
$\gamma$. We make the following set of assumptions:
\begin{itemize}
    \item[A1]  $Y = \beta_01_n + X_t\beta_{t} + \varepsilon$ where $t$ is the true model for some subset $t \subset \{1,2,\dots,p\}$, $\varepsilon= (\varepsilon_1, \dots,\varepsilon_n),$ $\varepsilon_i/\sigma \stackrel{iid}{\sim} F_0$ which has $q$-exponential tail with unit standard deviation.
    \item[A2] $z$ has a spherically symmetric distribution and there are some constants $c_1>1$ and $C_1 > 0$ such that
    \[P\{\alpha_{\max}(p^{-1}ZZ^{\top}) > c_1 \text{ or }\alpha_{\min}(p^{-1}ZZ^{\top}) < c_1^{-1})\}\leq \exp (-C_1n).\]
    \item [A3] We assume that $\text{var}(y) = O(1)$ and that, for some $\kappa\geq 0, \nu\geq 0, \tau\geq 0$ and 
    $c_2, c_3, c_4>0$ independent of $n , p$, 
    \[\min_{j\in t}|\beta_j|\geq \dfrac{c_2}{n^{\kappa}}, \qquad |t|\leq c_3n^{\nu}\quad\text{and}\quad \text{cond}(\Sigma)\leq c_4n^{\tau}.\]

    \item [A4] Finally, we assume 
    \[p\succ n^{1+\tau}, \quad \log(p) \prec \min\left\{\dfrac{n^{1-2\kappa-5\tau}}{2\log (n)}, q\left(\dfrac{\sqrt{C_1}n^{1/2-2\tau-\kappa}}{\sqrt{\log(n)}}\right)\right\}\]
    and $\lambda \prec n^{1-(5/2)\tau-\kappa}$.
    \end{itemize}
    The conditions A1--A3 as well as the second condition on $p$ in the assumption
    A4 are also assumed in \cite{wang:leng:2016} for proving screening consistency of the HOLP. 
    Let
    $S_\lambda = \tilde{Y}^{\top}\tilde{Y} -
    \tilde{Y}^{\top}X(X^{\top}X+\lambda I)^{-1}X^{\top}\tilde{Y} = n/v_{y, \lambda}$ 
    be the ridge residual sum of squares. The screening consistency
    result for the RPC method is stated in the following theorem.
\begin{theorem}
\label{thm:consistency}
Under the assumptions {\normalfont A1--A4}, there exists $\rho_n$ (defined in Appendix~\ref{app:consistency}) such that
\begin{equation*}
    \begin{split}
        P\bigg(\min_{i\in t} R_{i,\lambda}^2 > \dfrac{\rho_n^2}{S_\lambda + \rho_n^2} > \max_{i\notin t} R_{i,\lambda}^2\bigg)
        &= 1 - O\bigg(\exp\bigg\{-C_1\dfrac{n^{1-5\tau-2\kappa-\nu}}{2\log(n)}\bigg\}\\& +
  \exp\bigg[1-\dfrac{1}{2}q\bigg\{\dfrac{\sqrt{C_1}n^{1/2-2\tau-\kappa}}{\sqrt{\log(n)}}\bigg\}\bigg]\bigg).
        \end{split}
\end{equation*}

Alternatively, there exist $K_n \sim n^{\iota}$ for some $\iota\in(\nu,1]$ such that 
\[P(t\subset \gamma^{K_n}) = 1- O\left(\exp\left\{-C_1\dfrac{n^{1-2\kappa-5\tau-\nu}}{2\log(n)}\right\} + \exp\left[1-\dfrac{1}{2}q\left(\dfrac{\sqrt{C_1}n^{1/2-2\tau-\kappa}}{\sqrt{\log(n)}}\right)\right]\right)\]
\end{theorem}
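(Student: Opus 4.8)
The plan is to recast the theorem as a statement about the ridge (HOLP‑type) coefficient vector $b:=(X^\top X+\lambda I_p)^{-1}X^\top\tilde Y$ together with the ridge partial variances, and then to carry over the eigenvalue and tail estimates that \cite{wang:leng:2016} use for the HOLP. First I would write $\bm R$ in \eqref{eq:ridge} in block form with scalar corner $\tilde Y^\top\tilde Y$; the block inversion formula and the Woodbury identity give
\[
v_{y,\lambda}=\frac{n}{S_\lambda},\qquad v_{iy,\lambda}=-\frac{n}{S_\lambda}\,b_i,\qquad v_{i,\lambda}=n\Bigl(d_i+\frac{b_i^2}{S_\lambda}\Bigr),
\]
where $b_i=X_i^\top(XX^\top+\lambda I_n)^{-1}\tilde Y$ ($X_i$ being the $i$th column of $X$) and $d_i=\{(X^\top X+\lambda I_p)^{-1}\}_{ii}=\lambda^{-1}\bigl(1-X_i^\top(XX^\top+\lambda I_n)^{-1}X_i\bigr)$. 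Hence $R_{i,\lambda}^2=b_i^2/(S_\lambda d_i+b_i^2)$, and since $u\mapsto u/(c+u)$ is strictly increasing for $c>0$, cross‑multiplying shows, on the almost sure event where $\bm R$ is positive definite, that $R_{i,\lambda}^2>\rho_n^2/(S_\lambda+\rho_n^2)$ iff $b_i^2/d_i>\rho_n^2$, and likewise with the inequalities reversed. Thus, for any deterministic $\rho_n$, the event in the theorem coincides with $\{\max_{i\notin t}b_i^2/d_i<\rho_n^2<\min_{i\in t}b_i^2/d_i\}$; in particular $S_\lambda$ drops out, and it remains to produce a deterministic $\rho_n$ separating these two random quantities with the claimed probability.

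Next I would show the partial variances are essentially constant. Since $XX^\top+\lambda I_n\succeq XX^\top$ and, off the exceptional event in A2, $\alpha_{\min}(XX^\top)=\alpha_{\min}(Z\Sigma Z^\top)\ge\alpha_{\min}(\Sigma)\,\alpha_{\min}(ZZ^\top)\succeq p\,n^{-\tau}$ (using $\mathrm{cond}(\Sigma)\le c_4 n^\tau$ and $\alpha_{\max}(\Sigma)\ge1$ from A3), one gets $0<X_i^\top(XX^\top+\lambda I_n)^{-1}X_i\le\|X_i\|^2/\alpha_{\min}(XX^\top)\preceq n^{1+\tau}/p$, which is $o(1)$ uniformly in $i$ because $p\succ n^{1+\tau}$ and the columns of $X$ are scaled. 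Hence $d_i=\lambda^{-1}(1+o(1))$ uniformly, so $b_i^2/d_i=\lambda b_i^2(1+o(1))$ uniformly, and it suffices to separate $\min_{i\in t}b_i^2$ from $\max_{i\notin t}b_i^2$ and then take $\rho_n^2$ to be $\lambda$ times a separating value; this costs only the probability $\exp(-C_1 n)$ of the exceptional event in A2. Writing $\tilde Y=X_t\beta_t+\tilde\varepsilon$ with $\tilde\varepsilon$ the centered error vector (using that the columns of $X$ are centered), I would split $b=X^\top(XX^\top+\lambda I_n)^{-1}X_t\beta_t+X^\top(XX^\top+\lambda I_n)^{-1}\tilde\varepsilon$ into a ``signal'' and a ``noise'' part.

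For the signal part, the singular value decomposition of $X$ shows $X^\top(XX^\top+\lambda I_n)^{-1}X$ equals the rank‑$n$ orthogonal projection $P$ onto the row space of $X$ up to an operator‑norm error $\preceq\lambda n^\tau/p$, which is negligible relative to the quantities below thanks to the condition $\lambda\prec n^{1-(5/2)\tau-\kappa}$ in A4 and the bound $\|\beta_t\|\preceq n^{\tau/2}$ coming from $\mathrm{var}(y)=O(1)$ and $\mathrm{cond}(\Sigma)\le c_4 n^\tau$. Exploiting the spherical symmetry of the rows of $Z$, I would control, uniformly over all $p$ indices, the diagonal entries of $P$ (of order $n/p$) and its off‑diagonal entries (of order $\sqrt n/p$, up to the $\mathrm{cond}(\Sigma)$ distortion); with the signal strength $\min_{j\in t}|\beta_j|\ge c_2 n^{-\kappa}$ and the sparsity $|t|\le c_3 n^\nu$ of A3 this gives $\min_{i\in t}|(P\beta)_i|\succeq n^{1-\kappa}/p$, while for $i\notin t$ the leakage $(P\beta)_i=\sum_{j\in t}P_{ij}\beta_j$ is of strictly smaller polynomial order. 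For the noise part, conditionally on $X$, $X_i^\top(XX^\top+\lambda I_n)^{-1}\tilde\varepsilon=\ell_i^\top\varepsilon$ with $\|\ell_i\|\le\|(XX^\top+\lambda I_n)^{-1}X_i\|\le\alpha_{\max}\bigl((XX^\top+\lambda I_n)^{-1}\bigr)^{1/2}\bigl(X_i^\top(XX^\top+\lambda I_n)^{-1}X_i\bigr)^{1/2}\preceq n^{1/2+\tau}/p$; taking $\zeta_n=\sqrt{C_1}\,n^{1/2-2\tau-\kappa}/\sqrt{\log n}$, for which $\log p\prec q(\zeta_n)$ by A4, the $q$‑exponential tail in A1 and a union bound over $i=1,\dots,p$ give $P\bigl(\max_i|X_i^\top(XX^\top+\lambda I_n)^{-1}\tilde\varepsilon|>C\zeta_n n^{1/2+\tau}/p\bigr)\le p\exp\{1-q(\zeta_n)\}=O(\exp[1-\tfrac12 q(\zeta_n)])$, while $\zeta_n n^{1/2+\tau}/p=o(n^{1-\kappa}/p)$ since $p\succ n^{1+\tau}$.

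Assembling these bounds, on the intersection of the relevant events $\min_{i\in t}|b_i|\succeq n^{1-\kappa}/p$ strictly dominates $\max_{i\notin t}|b_i|$ with polynomial slack --- the first part of A4 is exactly what forces the signal exponent above the leakage exponent --- so there is a deterministic $\rho_n$ (the one written out in Appendix~\ref{app:consistency}) with $\max_{i\notin t}b_i^2/d_i<\rho_n^2<\min_{i\in t}b_i^2/d_i$ there. Collecting the failure probabilities --- $\exp(-C_1 n)$ from A2, the chi‑square--type concentration of the $Z$‑quadratic forms union‑bounded over the $O(p\,n^\nu)$ relevant indices, which the first part of A4 makes $O(\exp\{-C_1 n^{1-5\tau-2\kappa-\nu}/(2\log n)\})$, and the $q$‑tail bound above --- yields the stated rate. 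The ``alternatively'' statement is then immediate: on that event every $i\in t$ strictly dominates every $i\notin t$ in $|R_{i,\lambda}|$, so $t$ fills the top $|t|$ positions; since $|t|\le c_3 n^\nu$, any $K_n\sim n^\iota$ with $\iota\in(\nu,1]$ eventually satisfies $|t|\le K_n\le p$ (the last step because $p\succ n^{1+\tau}\succ n^\iota$), whence $t\subseteq\gamma^{K_n}$. The main obstacle will be the uniform‑in‑$i$ control of the coordinates of the signal vector $X^\top(XX^\top+\lambda I_n)^{-1}X_t\beta_t$ --- essentially showing that $P$ behaves like a scaled identity plus small off‑diagonal noise \emph{after} absorbing the $\mathrm{cond}(\Sigma)\le c_4 n^\tau$ distortion --- together with tracking the exponents $\kappa,\nu,\tau$ so that the separation holds exactly under A4; this is where the HOLP arguments of \cite{wang:leng:2016} are carried over, whereas the ridge‑partial‑variance normalization enters only through the harmless constant‑order control of the $d_i$.
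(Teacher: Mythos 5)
Your proposal follows essentially the same route as the paper's proof: reduce the ordering of the $R_{i,\lambda}^2$ to the ordering of $\hat{\beta}_i^2(\lambda)/\xi_i$ via the block-inverse identities, show the ridge partial variances satisfy $\lambda\xi_i\to 1$ uniformly (you do this by a direct eigenvalue bound on $X_i^{\top}(XX^{\top}+\lambda I)^{-1}X_i$ where the paper Taylor-expands into $HH^{\top}+M$, but both yield the same $O(n^{1+\tau}/p)$ control), and then import the HOLP separation of Theorem 3 in \cite{wang:leng:2016} to choose $\rho_n$. The argument and the resulting error rates match the paper's, so the proposal is correct and not materially different.
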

\begin{proof}
The proof of the theorem is given in Section~\ref{app:consistency} of the supplement.
\end{proof}

The second result, which directly follows
from the first result, reveals that, if a submodel of size larger than
the true model is selected, then the submodel contains the true model with probability tending to one.


 \section{Fast statistical computation }
\label{sec:comfw}

In this section we describe how the ridge partial correlation coefficients are computed at the same order of computing cost as HOLP. To that end, let $W = XX^\top + \lambda I.$ Then, using Woodbury matrix identity,
\begin{equation}
\begin{split}
\label{comp:vy}
    v_{y, \lambda}/n &= (\tilde{y}^{\top}\tilde{y} - \tilde{y}^{\top}X(X^{\top}X+\lambda I)^{-1}X^{\top}\tilde{y})^{-1}\\
                   &= (\tilde{y}^{\top}\tilde{y} - \tilde{y}^{\top}X\left\{\dfrac{1}{\lambda}I-\dfrac{1}{\lambda}X^{\top}(XX^{\top}+\lambda I)^{-1}X\right\}X^{\top}\tilde{y})^{-1} \\
                   &= (\tilde{y}^{\top}\tilde{y} - \dfrac{1}{\lambda}\tilde{y}^{\top}XX^{\top}\tilde{y} + \dfrac{1}{\lambda}\tilde{y}^{\top}XX^{\top}(XX^{\top}+\lambda I)^{-1}XX^{\top}\tilde{y})^{-1}\\
                   &= (\tilde{y}^{\top}\tilde{y} - \dfrac{1}{\lambda}\tilde{y}^{\top}XX^{\top}\tilde{y} + \dfrac{1}{\lambda}\tilde{y}^{\top}XX^{\top}\tilde{y} - \tilde{y}^{\top}(XX^{\top}+\lambda I)^{-1}XX^{\top}\tilde{y})^{-1}\\
                   &= (\tilde{y}^{\top}\tilde{y} - \tilde{y}^{\top}\tilde{y} + \lambda \tilde{y}^{\top}(XX^{\top}+\lambda I)^{-1}\tilde{y})^{-1}\\
                   &= 1/\left[\lambda \tilde{y}^{\top}W^{-1}\tilde{y}\right].
\end{split}                   
\end{equation}
Similarly, if $e_i$ denotes the $i$th canonical basis vector of $\mathbb{R}^p,$ then,
\begin{equation} 
\label{comp:viy}
    \begin{split}
       v_{iy, \lambda}/n &= -e_i^\top(X^{\top}X + \lambda I)^{-1}X^{\top}\tilde{y}~/~ \left[\tilde{y}^{\top}\tilde{y} - \tilde{y}^{\top}X(X^{\top}X+\lambda I)^{-1}X^{\top}\tilde{y}\right]\\
                       &= -X^{\top}_iW^{-1}\tilde{y}/ \left[ \lambda \tilde{y}^{\top}W^{-1}\tilde{y}\right],
    \end{split}
\end{equation}
and finally,
\begin{equation}
\label{comp:vi}
    \begin{split}
        v_{i, \lambda}/n &= e_i^\top\left\{(X^{\top}X + \lambda I)^{-1} + \dfrac{(X^{\top}X + \lambda I)^{-1}X^{\top}\tilde{y}\tilde{y}^{\top}X (X^{\top}X + \lambda I)^{-1}}{\tilde{y}^{\top}\tilde{y} - \tilde{y}^{\top}X(X^{\top}X+\lambda I)^{-1}X^{\top}\tilde{y}}\right\}e_i\\
                       &= e_i^\top\left\{\dfrac{1}{\lambda}I-\dfrac{1}{\lambda}X^{\top}(XX^{\top}+\lambda I)^{-1}X\right\}e_i + \dfrac{(X^{\top}_iW^{-1}\tilde{y})^2}{\lambda \tilde{y}^{\top}W^{-1}\tilde{y}}\\
                       &= \dfrac{1}{\lambda} + \dfrac{(X^{\top}_iW^{-1}\tilde{y})^2}{\lambda \tilde{y}^{\top}W^{-1}\tilde{y}} - \dfrac{1}{\lambda}X^{\top}_iW^{-1}X_i
    \end{split}
\end{equation}

\subsection{Algorithm for computing ridge partial correlations}
In practice, these quantities can be efficiently computed using the Cholesky factor of $W$: Let $S$ be an upper triangular matrix such that $W = S^{\top}S$. Let $\theta = S^{-\top}\tilde{y}, u_i = S^{-\top}X_i$. Then $\tilde{y}^{\top}W^{-1}\tilde{y} = \theta^{\top}\theta, X^{\top}_iW^{-1}\tilde{y} = u^{\top}_i\theta$ and $X^{\top}_iW^{-1}X_i = u^{\top}_iu_i$. Substitute these equations into \eqref{comp:vy}, \eqref{comp:viy} and \eqref{comp:vi}, we get the following algorithm for computing ridge partial correlations:
\begin{enumerate}
    \item Given $X, \tilde{y}$ and $\lambda$, compute $W = XX^{\top} + \lambda I$.
    \item Compute the Cholesky factor of $W$, denoted by $S$, which is upper triangular satisfying $W = S^{\top}S$.
    \item Compute $\theta = S^{-\top}\tilde{y}$ and $v_{y, \lambda} = n/(\lambda \theta^{\top}\theta).$
    \item For each $i = 1,2,\ldots,p$, 
    \begin{itemize}
     \item[--] compute $u_i = S^{-\top}X_i$, $v_{iy, \lambda} = -v_{y,\lambda}\cdot u_i^{\top}\theta$ and 
     $v_{i, \lambda} = n\lambda^{-1} + v_{y, \lambda}(u_i^{\top}\theta)^2 - n\lambda^{-1}u^{\top}_iu_i.$
     \item[--] Compute $R_{i,\lambda} = -v_{iy, \lambda}/\{v_{i, \lambda}\times v_{y, \lambda}\}^{1/2}$
    \end{itemize}
\end{enumerate}

In the above algorithm, the third step can be parallelized over $i$ as implemented in the R package \emph{rpc} \citep{r:rpc}. Overall, the computational complexity of the above algorithm is $O(n^3) + O(n^2p)$ where the $O(n^3)$ term correspond to computing the Cholesky factorization of $XX^\top + \lambda I_n$ and the $O(n^2p)$ term arises due to computing $XX^\top$ and for computing $u_i$s. This is encouragingly the same as the computational complexity of HOLP.

 \section{Simulation study}
\label{sec:simu}

In this section, we compare the performance of the proposed ridge partial correlation screening with SIS (\cite{fan:lv:2008}), forward
regression (\cite{wang:2009}), BITS (\cite{wang:dutt:roy:2021}) and HOLP (\cite{wang:leng:2016}) by extensive simulation studies.

\subsection{Simulation settings}

We consider seven data generating models in our numerical studies. For
these settings E.1–E.7, the rows of $X$ are generated from
multivariate normal distributions with mean zero and different
covariance structures. We consider $n = 300$ and $p=5000$ in all
simulation settings. The response variable $y$ is generated by model
\eqref{eq:model} with error term $\epsilon$ following three separate
distributions with different tail behaviors. In the first case, we consider $\epsilon$ to be normally distributed with
mean 0 and variance 1. Next, we take $\epsilon$ to be shifted exponentially distributed with mean
0 and variance 1 and support $[-1, \infty)$, that is,
$\epsilon \stackrel{d}{=}\zeta - 1$ where $\zeta \sim \text{Exp}(1)$. Finally, we consider $\epsilon$ to be
scaled student $t$ distributed with degrees of freedom 20, mean 0
and variance 1, that is,
$\epsilon \stackrel{d}{=}\eta / \sqrt{20/18}$ where
$\eta \sim t_{20}$. Four values of theoretical $R^2 (=$ Var $(x^\top \beta)/$ Var $(y))$, namely
$R^2= 0.1,0.2,0.3,$ and $0.4$ are assumed. In E.1 - E.6, $\beta_j = 1$ for $j\leq 9$
and $\beta_j = 0$ for $j>9$. In E.7, $\beta_j = 1$ for $j\leq 25$ and
$\beta_j = 0$ for $j>25$.
\begin{enumerate}[label=E.\arabic*,leftmargin=*]
    \item \label{sec:ind} \textbf{Independent predictors} ~~ In this example, $\Sigma= I_p,$ the identity matrix of dimension $p$.
    \item \label{sec:comp}\textbf{Compound symmetry} ~~ Here, $\Sigma = \rho 1_p1_p^\top + (1-\rho)I_p.$ We set $\rho = 0.5$ in our study.
    \item \label{sec:auto} \textbf{Autoregressive correlation} ~~ This correlation structure is
appropriate if there is an ordering (say, based on time) in the
covariates, and variables further apart are less correlated. We use the AR(1) structure where the $(i,j)$th entry of $\Sigma$ is $ \rho^{|i-j|}.$ We set $\rho = 0.5$ in our study.
    \item \label{sec:fac} \textbf{Factor models} ~~ This example is from \cite{wang:leng:2016}. Fix $k=10.$ Let $F$ be a $p\times k$ random matrix with entries iid $\mathcal{N}(0,1).$ The $\Sigma$ is then given by $FF^\top + I_p.$
    \item \label{sec:gr} \textbf{Group structure} ~~ In this example, the true variables in the same group are highly correlated. This is a modification of example 4 of \cite{zou:hast:2005}. The predictors are generated by 
$X_{m} = z_1 + \zeta_{1, m}$, $X_{3+m} = z_2 + \zeta_{2, m}$,
$X_{6+m} = z_3 + \zeta_{3, m}$  where
$z_i \overset{iid}\sim \sN_n(0, I_n)$,
$\zeta_{i, m} \overset{iid}\sim \sN_n(0, 0.01I_n)$ and $z_i$'s and $\zeta_{i, m}$'s are independent for $1\leq i \leq 3$ and for $m=1,2,3$.
    \item \label{sec:ext} \textbf{Extreme correlation} ~~ This example is a modification of the challenging example 4 of \cite{wang:2009}, making it more complex. Generate $Z_i \stackrel{iid}{\sim} {\cal N}_{n} (0, I), i= 1, \dots, p$,
and $W_i \stackrel{iid}{\sim} {\cal N}_{n} (0, I), i= 1, \dots,9$. Set
$X_i = (Z_i + W_i)/\sqrt{2}, i= 1, \dots, 9$ and
$X_i = (Z_i + \sum_{i=1}^{9}W_i)/2$ for $i= 10, \dots, p$. The marginal correlation between $y$ and $X_i, i \geq 10$ (unimportant variables) is $2.5/\sqrt{3}$ times larger than the same between $y$ and $X_j, j\leq 9$ (important variables).

\item \label{sec:sparsefac} \textbf{Sparse factor models} ~~ This is the sparse version of \ref{sec:fac}. For each $1\leq j\leq 5,$ generate $f_{ij} \sim \sN(0,1)$ if $5(j-1)+1\le i \le 5j,$ and $f_{ij} = 0$ otherwise. Also, $\Sigma = FF^\top + 0.01I_p.$
\end{enumerate}

For each simulation setup, a total of 100 data sets are generated and we evaluate the performance of each method by the coverage probability (CP) and true positive rate (TPR). The CP and TPR are defined as follows. Let $\hat{\gamma}^{(i)}$ denote the chosen model in the $i$th replication for $i = 1,2,\ldots, 100.$ The coverage probability is calculated by CP = $\sum_{i=1}^{100} \mathbb{I}(\widehat{\gamma}^{(i)} \supseteq t)/100$. The true positive rate is calculated by TPR =  $(1/100)\sum_{i=1}^{100} \big|\widehat{\gamma}^{(i)}\cap t\big|/|t|.$ The CP measures whether all the important variables are survived after screening while TPR measures how many important variables are survived after screening.

\subsection{Summary of simulation results}

The simulation results for $R^2 = 0.2$ and normally distributed errors are shown in Table~\ref{tab:simulation}. Three distinct values of $\lambda$, namely $p/n$ (RPC1), $n \log n/p$ (RPC2) and $n/p$ (RPC3) are considered in our study. All these three methods select a submodel of size $n$. The URPC selects the union of submodels given by RPC1, RPC2 and RPC3. The UBITS selects the union of submodels of size $n$ given by BITS with the same three choices of $\lambda$ as RPC1, RPC2, and RPC3, respectively. 

In general, the RPC method is not sensitive to the value of $\lambda$. In most cases, the RPC method leads to similar performance as the HOLP and is competitive with the other methods. However, in the most difficult case of extreme correlation~\ref{sec:ext}, RPC beats all other methods, including the HOLP. Surprisingly, the RPC methods have almost twice coverage probability than the HOLP in the setup \ref{sec:ext}. Therefore, despite having some similarities with the HOLP, RPC provides better performance than the HOLP.

Results from the other simulation settings corresponding to different
$R^2$ values, and other error distributions are provided in Section~\ref{app:simu} of the supplement.
\begin{table}
\begin{center}
\caption{Simulation results (in \%) for $R^2 = 0.2$ and error normally distributed.}
\label{tab:simulation}
\begin{small}
\begin{adjustwidth}{-.05cm}{}
\begin{tabular}{rrr|rr|rr|rr|rr|rr|rr}
\hline\hline\multicolumn{1}{c}{Method}    & \multicolumn{12}{c}{Correlation Structure}\\

\hline
                        & \multicolumn{2}{c|}{IID}      & \multicolumn{2}{c|}{Compound} & \multicolumn{2}{c|}{Group}    & \multicolumn{2}{c|}{AR}       & \multicolumn{2}{c|}{Factor}  & \multicolumn{2}{c|}{ExtrCor}  &
                        \multicolumn{2}{c}{SpFactor}\\
                           & TPR  & CP  & TPR  & CP  & TPR  & CP  & TPR  & CP  & TPR  & CP  & TPR  & CP  &TPR  &CP  \\\hline
\multicolumn{1}{l}{RPC1}  
                               & 75.6 & 9 & 16.8 & 0 & 98.3 & 92 & 95.6 & 69 & 14.8 & 0 & 82.8 & 17 & 55.9 & 0    \\

\multicolumn{1}{l}{RPC2}  
                               & 75.6 & 9 & 16.8 & 0 & 98.3 & 92 & 95.6 & 69 & 15.1 & 0 & 82.8 & 19 & 56.0 & 0    \\

\multicolumn{1}{l}{RPC3}  
                              & 75.6 & 9 & 16.8 & 0 & 98.3 & 92 & 95.6 & 69 & 14.8 & 0 & 82.8 & 17 & 55.9 & 0    \\

\multicolumn{1}{l}{URPC}  
                               & 75.6 & 9 & 16.8 & 0 & 98.3 & 92 & 95.6 & 69 & 15.1 & 0 & 83.0 & 19 & 56.0 & 0    \\

\multicolumn{1}{l}{UBITS}  
                              & 74.3 & 9 & 22.8 & 0 & 98.3 & 94 & 91.7 & 44 & 23.0 & 0 & 35.7 & 0    & 70.0 & 1 \\

\multicolumn{1}{l}{HOLP}  
                              & 75.6 & 9 & 16.9 & 0 & 98.3 & 92 & 95.6 & 69 & 14.9 & 0 & 77.4 & 10  & 56.0 & 0    \\

\multicolumn{1}{l}{FR}  
                              & 32.1 & 0 & 7.7 & 0 & 34.2 & 0 & 34.1 & 0 & 7.3 & 0 & 12.2 & 0    & 13.6 & 0    \\

\multicolumn{1}{l}{SIS}   
                              & 77.4 & 11 & 23.8 & 0 & 99.6 & 98 & 97.3 & 80 & 20.1 & 0 & 0.4 & 0    & 62.5 & 1\\\hline
\end{tabular}
\end{adjustwidth}


\end{small}
\end{center}
\end{table}


 \section{Real data application}

\label{sec:data}

\cite{Bai:2019SpikeandSlabGL} analyzed a mammalian eye disease data set where gene expression measurements on the eye tissues from 120 female rats were recorded. This data set was originally analyzed by \cite{Scheetz:2006}. In this data set, the gene TRIM32 is considered to be responsible for causing Bardet-Biedl syndrome and the goal is to find out genes which are correlated with TRIM32. The original data set consists of 31,099 probe sets. Following  \cite{Bai:2019SpikeandSlabGL}, we chose the top 25000 probe sets with the highest sample variance (in logarithm scale). 

In this study, a cross validation was applied. Specifically, in each replication, we randomly split the whole data set into a training set of size $n = 100$ and a testing set of size 20. Submodels of size $n$ were selected by RPC with different $\lambda$ values: $p/n$ (RPC1), $n\log n/p$ (RPC2) and $n/p$ (RPC3). We also considered the union of these submodels. We compared these methods with BITS, HOLP, FR, and SIS. For BITS, we used the union of submodels given by BITS with the same $\lambda$ values as the RPC. 
After the screening, two model selection methods, SCAD and Lasso, were applied to the submodels to get the final selected models. 

From the box plots in Figure~\ref{fig:realdata}, we see that the RPC screening followed by the Lasso model selection method yields more accurate predictions than other methods. Results from the union of the RPC methods do not show too much difference with the results from the individual RPC methods, both in terms of prediction accuracy and the final model size. The reason might be that the RPC screening is insensitive to the choice of the hyperparameter $\lambda$ and RPC1, RPC2 and RPC3 produce similar models. 

\begin{figure}[htbp]
\begin{center}
  \includegraphics[width = 0.85\textwidth]{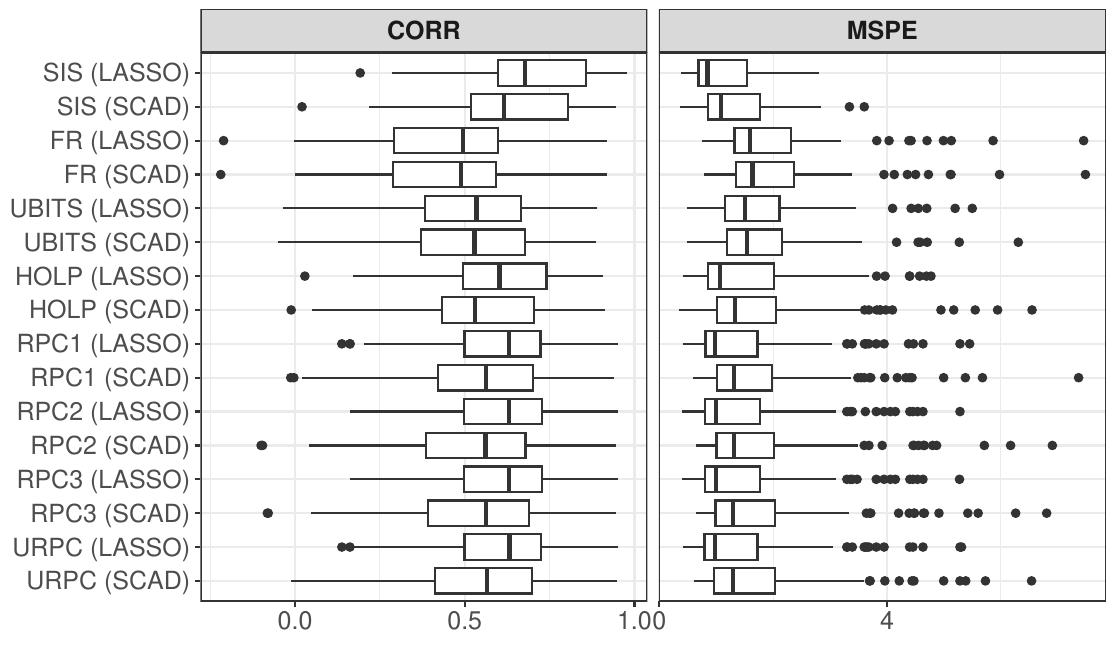}
\begin{tabular}{c|cc|cc}
  \hline
            & \multicolumn{2}{c|}{SCAD} & \multicolumn{2}{c}{LASSO} \\
          & Mean        & SE         & Mean        & SE          \\\hline
SIS & 7.91 & 3.54 & 16.60 & 2.22 \\ 
FR & 50.66 & 1.66 & 62.34 & 4.70 \\ 
UBITS & 32.42 & 13.60 & 37.33 & 15.55 \\ 
HOLP & 23.33 & 4.73 & 38.23 & 5.24 \\ 
RPC1 & 26.65 & 3.48 & 43.45 & 5.93 \\ 
RPC2 & 26.80 & 3.70 & 43.39 & 5.76 \\ 
RPC3 & 26.46 & 3.88 & 43.41 & 5.86 \\ 
URPC & 26.75 & 4.17 & 43.73 & 6.10 \\ 
   \hline
\end{tabular}


\end{center}
    \caption{Prediction accuracy on the test sets: correlation (top left), MSPE ($\times 10^{-4}$, top right). The mean model sizes and their standard errors are shown in the table.} 
     \label{fig:realdata}
\end{figure}

 \section{Discussion}
\label{sec:disc}

In this paper, we develop a ridge partial correlation (RPC) method for
screening variables in an ultrahigh-dimensional regression model. RPC
screening does not require strong assumptions on marginal
correlations, which means that it can be applied to more data sets with
complex correlation structures than the SIS. Theoretically, as
$n,p\to\infty$, the RPC screening will produce the same model as the HOLP, which
ensures that the performance of the RPC screening is not worse than the HOLP. We
show that the RPC screening leads to better performance in the extreme
correlation cases where the marginal correlations between the response
and unimportant variables are larger than the same between the
response and the true predictors. Finally, we show that the RPC
screening can be fast implemented by using Cholesky decomposition and
parallel computing.

The definition of ridge partial correlation does not require the linear
regression assumption. RPC screening can also be extended to larger
classes of regression models, for example the generalized linear
regression models. Potential future works include developing RPC
screening methods for generalized additive models.

\section*{Acknowledgement}
 The work was partially supported by USDA NIFA 2023-70412-41087 (S.D., V.R.), USDA Hatch Project IOW03717 (S.D.), and the Plant Science Institute (S.D.).

\begingroup
\small
\bibliographystyle{asadoi}
\bibliography{ref}

\begin{thebibliography}{28}
\providecommand{\natexlab}[1]{#1}
\providecommand{\url}[1]{\texttt{#1}}
\expandafter\ifx\csname urlstyle\endcsname\relax
  \providecommand{\doi}[1]{doi: #1}\else
  \providecommand{\doi}{doi: \begingroup \urlstyle{rm}\Url}\fi

\bibitem[Bai et~al.(2022)Bai, Moran, Antonelli, Chen, and
  Boland]{Bai:2019SpikeandSlabGL}
Ray Bai, Gemma~E. Moran, Joseph Antonelli, Yong Chen, and Mary~Regina Boland.
\newblock Spike-and-slab group lassos for grouped regression and sparse
  generalized additive models.
\newblock \emph{Journal of the American Statistical Association}, 117\penalty0
  (537):\penalty0 184--197, 2022.

\bibitem[Chikuse(2003)]{Chikuse:2003}
Yasuko Chikuse.
\newblock \emph{Statistics on special manifolds}.
\newblock Springer New York, 2003.
\newblock \doi{10.1007/978-0-387-21540-2}.

\bibitem[Cho and Fryzlewicz(2012)]{cho:fryz:2012}
Haeran Cho and Piotr Fryzlewicz.
\newblock High dimensional variable selection via tilting.
\newblock \emph{Journal of the Royal Statistical Society, {\rm Series B}},
  74\penalty0 (3):\penalty0 593--622, 2012.

\bibitem[Dutta and Nguyen(2025)]{r:rpc}
Somak Dutta and An~Nguyen.
\newblock \emph{rpc: Ridge Partial Correlation}, 2025.
\newblock URL \url{https://CRAN.R-project.org/package=rpc}.
\newblock R package version 2.0.3.

\bibitem[Fan and Li(2001)]{fan:li:2001}
J.~Fan and R.~Li.
\newblock Variable selection via nonconcave penalized likelihood and its oracle
  property.
\newblock \emph{Journal of the American Statistical Association}, 96:\penalty0
  1348--1360, 2001.

\bibitem[Fan and Lv(2008)]{fan:lv:2008}
Jianqing Fan and Jinchi Lv.
\newblock Sure independence screening for ultrahigh dimensional feature space.
\newblock \emph{Journal of Royal Statistical Society, Series B}, 70\penalty0
  (5):\penalty0 849--911, 2008.

\bibitem[Fan and Song(2010)]{fan:song:2010}
Jianqing Fan and Rui Song.
\newblock Sure independence screening in generalized linear models with
  {NP}-dimensionality.
\newblock \emph{The Annals of Statistics}, 38\penalty0 (6):\penalty0
  3567--3604, 2010.

\bibitem[Fan et~al.(2011)Fan, Feng, and Song]{fan:feng:song:2011}
Jianqing Fan, Yang Feng, and Rui Song.
\newblock Nonparametric independence screening in sparse ultra-high-dimensional
  additive models.
\newblock \emph{Journal of the American Statistical Association}, 106\penalty0
  (494):\penalty0 544--557, 2011.

\bibitem[Gorst-Rasmussen and Scheike(2013)]{gors:sche:2013}
Anders Gorst-Rasmussen and Thomas Scheike.
\newblock Independent screening for single-index hazard rate models with
  ultrahigh dimensional features.
\newblock \emph{Journal of the Royal Statistical Society Series B: Statistical
  Methodology}, 75\penalty0 (2):\penalty0 217--245, 2013.

\bibitem[Hall and Miller(2009)]{hall:mill:2009}
Peter Hall and Hugh Miller.
\newblock Using generalized correlation to effect variable selection in very
  high dimensional problems.
\newblock \emph{Journal of Computational and Graphical Statistics}, 18\penalty0
  (3):\penalty0 533--550, 2009.

\bibitem[Hao and Zhang(2014)]{hao:zhan:2014}
Ning Hao and Hao~Helen Zhang.
\newblock Interaction screening for ultrahigh-dimensional data.
\newblock \emph{Journal of the American Statistical Association}, 109\penalty0
  (507):\penalty0 1285--1301, 2014.

\bibitem[Hoerl and Kennard(1970)]{hoer:kenn:1970}
A.~E. Hoerl and R.~W. Kennard.
\newblock Ridge regression: {B}iased estimation for nonorthogonal problems.
\newblock \emph{Technometrics}, 12:\penalty0 55--67, 1970.

\bibitem[Horn and Johnson(1985)]{johnson:1985}
Roger~A. Horn and Charles~R. Johnson.
\newblock \emph{Matrix Analysis}.
\newblock Cambridge University Press, 1985.
\newblock \doi{10.1017/CBO9780511810817}.

\bibitem[Huang et~al.(2008)Huang, Horowitz, and Ma]{huan:horo:ma:2008}
Jian Huang, Joel~L Horowitz, and Shuangge Ma.
\newblock Asymptotic properties of bridge estimators in sparse high-dimensional
  regression models.
\newblock \emph{The Annals of Statistics}, 36\penalty0 (2):\penalty0 587--613,
  2008.

\bibitem[Li et~al.(2012{\natexlab{a}})Li, Peng, Zhang, and
  Zhu]{li:peng:zhan:zhu:2012}
Gaorong Li, Heng Peng, Jun Zhang, and Lixing Zhu.
\newblock Robust rank correlation based screening.
\newblock \emph{The Annals of Statistics}, 40\penalty0 (3):\penalty0
  1846--1877, 2012{\natexlab{a}}.

\bibitem[Li et~al.(2012{\natexlab{b}})Li, Zhong, and Zhu]{li:zhon:zhu:2012}
Runze Li, Wei Zhong, and Liping Zhu.
\newblock Feature screening via distance correlation learning.
\newblock \emph{Journal of the American Statistical Association}, 107\penalty0
  (499):\penalty0 1129--1139, 2012{\natexlab{b}}.

\bibitem[Lin and Pang(2014)]{lin:pang:2014}
Bingqing Lin and Zhen Pang.
\newblock Tilted correlation screening learning in high-dimensional data
  analysis.
\newblock \emph{Journal of Computational and Graphical Statistics}, 23\penalty0
  (2):\penalty0 478--496, 2014.

\bibitem[Ma et~al.(2017)Ma, Li, and Tsai]{ma:li:tsai:2017}
Shujie Ma, Runze Li, and Chih-Ling Tsai.
\newblock Variable screening via quantile partial correlation.
\newblock \emph{Journal of the American Statistical Association}, 112\penalty0
  (518):\penalty0 650--663, 2017.

\bibitem[Scheetz et~al.(2006)Scheetz, Kim, Swiderski, Philp, Braun, Knudtson,
  Dorrance, DiBona, Huang, Casavant, Sheffield, and Stone]{Scheetz:2006}
Todd~E. Scheetz, Kwang-Youn~A. Kim, Ruth~E. Swiderski, Alisdair~R. Philp,
  Terry~A. Braun, Kevin~L. Knudtson, Anne~M. Dorrance, Gerald~F. DiBona, Jian
  Huang, Thomas~L. Casavant, Val~C. Sheffield, and Edwin~M. Stone.
\newblock Regulation of gene expression in the mammalian eye and its relevance
  to eye disease.
\newblock \emph{Proceedings of the National Academy of Sciences}, 103\penalty0
  (39):\penalty0 14429--14434, 2006.
\newblock ISSN 0027-8424.
\newblock \doi{10.1073/pnas.0602562103}.
\newblock URL \url{https://www.pnas.org/content/103/39/14429}.

\bibitem[Tibshirani(1996)]{tibs:1996}
R.~Tibshirani.
\newblock Regression shrinkage and selection via the lasso.
\newblock \emph{Journal of the Royal Statistical Society, {\rm Series B}},
  58:\penalty0 267--288, 1996.

\bibitem[Vershynin(2012)]{vers:2012}
Roman Vershynin.
\newblock Introduction to the non-asymptotic analysis of random matrices.
\newblock In Yonina~C. Eldar and GittaEditors Kutyniok, editors,
  \emph{Compressed Sensing: Theory and Applications}, page 210–268. Cambridge
  University Press, 2012.
\newblock \doi{10.1017/CBO9780511794308.006}.

\bibitem[Wang(2009)]{wang:2009}
Hansheng Wang.
\newblock Forward regression for ultra-high dimensional variable screening.
\newblock \emph{Journal of the American Statistical Association}, 104\penalty0
  (488):\penalty0 1512--1524, 2009.

\bibitem[Wang et~al.(2021)Wang, Dutta, and Roy]{wang:dutt:roy:2020}
Run Wang, Somak Dutta, and Vivekananda Roy.
\newblock A note on marginal correlation based screening.
\newblock \emph{Statistical Analysis and Data Mining: The ASA Data Science
  Journal}, 14\penalty0 (1):\penalty0 88--92, 2021.

\bibitem[Wang et~al.(2025)Wang, Dutta, and Roy]{wang:dutt:roy:2021}
Run Wang, Somak Dutta, and Vivekananda Roy.
\newblock Bayesian iterative screening in ultra-high dimensional linear
  regressions.
\newblock \emph{Bayesian Analysis}, 2025.
\newblock Advance Publication 1-26, https://doi.org/10.1214/25-BA1517.

\bibitem[Wang and Leng(2016)]{wang:leng:2016}
Xiangyu Wang and Chenlei Leng.
\newblock High dimensional ordinary least squares projection for screening
  variables.
\newblock \emph{Journal of the Royal Statistical Society: Series B (Statistical
  Methodology)}, 78\penalty0 (3):\penalty0 589--611, 2016.

\bibitem[Zhao and Li(2012)]{zhao:li:2012}
Sihai~Dave Zhao and Yi~Li.
\newblock Principled sure independence screening for cox models with
  ultra-high-dimensional covariates.
\newblock \emph{Journal of multivariate analysis}, 105\penalty0 (1):\penalty0
  397--411, 2012.

\bibitem[Zou(2006)]{zou:2006}
H.~Zou.
\newblock The adaptive lasso and its oracle properties.
\newblock \emph{Journal of the American Statistical Association}, 101:\penalty0
  1418--1429, 2006.

\bibitem[Zou and Hastie(2005)]{zou:hast:2005}
H.~Zou and T.~Hastie.
\newblock Regularization and variable selection via the {E}lastic {N}et.
\newblock \emph{Journal of the Royal Statistical Society, {\rm Series B}},
  67:\penalty0 301--320, 2005.

\end{thebibliography}
\endgroup

\newpage
\begin{center}
\textbf{\large Supplement to \\
``Ridge partial correlation screening for ultrahigh-dimensional data" \\Run Wang, An Nguyen, Somak Dutta, and Vivekananda Roy}
\end{center}
\setcounter{equation}{0}
\setcounter{figure}{0}
\setcounter{table}{0}
\setcounter{page}{1}
\setcounter{section}{0}
\makeatletter
\renewcommand{\thesection}{S\arabic{section}}
\renewcommand{\thesubsection}{\thesection.\arabic{subsection}}
\renewcommand{\theequation}{S\arabic{equation}}
\renewcommand{\thefigure}{S\arabic{figure}}
\renewcommand{\bibnumfmt}[1]{[S#1]}
\renewcommand{\citenumfont}[1]{S#1}
\renewcommand{\thetable}{S\arabic{table}}
 \section{Some notations}
  \label{app:notation}
\label{sec:notations}
The following notations will be used in the proof of Theorem~\ref{thm:consistency}:
\begin{itemize}
    \item $\mathcal{O}(n)$ denotes the $n$th order orthogonal group.
    \item The Stiefel manifold $V_{n,p}$ is the set of $p \times n$ matrices $A$ such that $A^\top A = I_n$ \citep{Chikuse:2003}.
\end{itemize}
For any $\lambda > 0$, 
\begin{itemize}
    \item $\hat{\beta}(\lambda) = (X^{\top}X + \lambda I)^{-1}X^{\top}\tilde{Y} = X^{\top}(XX^{\top}+\lambda I)^{-1}\tilde{Y}$ is the ridge regression estimator of $\beta$.
    \item $Z = VDU^{\top}$ is the singular value decomposition of $Z$, where $V\in\mathcal{O}(n)$,  $U\in V_{n,p}$, and $D$ is an $n \times n$ diagonal matrix. Thus, $X = VDU^{\top}\Sigma^{1/2}$.
    \item $A = (U^{\top}\Sigma U)^{1/2}$ satisfies $A^{\top}A = AA^{\top}=A^2$.
    \item $H = \Sigma^{1/2}U(U^{\top}\Sigma U)^{-1/2}$ satisfies $H^{\top}H = I_n$.
    \item $M = \sum_{k=1}^{\infty}\lambda^k\Sigma^{1/2}UA^{-1}(A^{-\top}D^{-2}A^{-1})^kA^{-\top}U^{\top}\Sigma^{1/2}$.
    \item $\xi_i$ is the $i$th diagonal element of $(X^{\top}X + \lambda I)^{-1}, i=1,\dots,p$.
\end{itemize}

\section{Some useful lemmas}
\label{sec:lemmas}
In this Section, we present some useful lemmas which will be used in the proof of Theorem \ref{thm:consistency}.

\begin{lemma}
\label{lemma:rpcrewrite}
In Definition \ref{def:rpc}, $v_{iy, \lambda}$ is the same as the $i$th element of 
\[\dfrac{-(X^{\top}X + \lambda I)^{-1}X^{\top}\tilde{Y}}{S_\lambda} ,\]
 
 $v_{i, \lambda}$ is the same as $i$th diagonal element of 
 \[(X^{\top}X + \lambda I)^{-1} + \dfrac{(X^{\top}X + \lambda I)^{-1}X^{\top}\tilde{Y}\tilde{Y}^{\top}X (X^{\top}X + \lambda I)^{-1}}{S_\lambda},\]
 and 
 \[v_{y, \lambda} = \dfrac{1}{S_\lambda}.\]

\end{lemma}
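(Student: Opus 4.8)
The plan is to read off all three quantities from a single application of the block-matrix inversion (Schur complement) identity to the $(p+1)\times(p+1)$ matrix sitting inside the inverse in \eqref{eq:ridge}. Write that matrix as
\begin{equation*}
\begin{pmatrix} a & b^{\top}\\ b & D \end{pmatrix},\qquad a=\tilde{Y}^{\top}\tilde{Y}\in\Real,\quad b=X^{\top}\tilde{Y}\in\Real^{p},\quad D=X^{\top}X+\lambda I.
\end{equation*}
Since $\lambda>0$, the block $D$ is positive definite, hence invertible, and its Schur complement in this matrix is precisely the scalar $a-b^{\top}D^{-1}b=\tilde{Y}^{\top}\tilde{Y}-\tilde{Y}^{\top}X(X^{\top}X+\lambda I)^{-1}X^{\top}\tilde{Y}=S_{\lambda}$. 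Using the identity $I-X(X^{\top}X+\lambda I)^{-1}X^{\top}=\lambda(XX^{\top}+\lambda I)^{-1}$, the same one used in \eqref{comp:vy}, one gets $S_{\lambda}=\lambda\,\tilde{Y}^{\top}(XX^{\top}+\lambda I)^{-1}\tilde{Y}>0$ whenever $\tilde{Y}\neq 0$, so every inverse below is well defined.

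Next I would invoke the standard block-inverse formula
\begin{equation*}
\begin{pmatrix} a & b^{\top}\\ b & D \end{pmatrix}^{-1}
=\begin{pmatrix} S_{\lambda}^{-1} & -S_{\lambda}^{-1}b^{\top}D^{-1}\\ -S_{\lambda}^{-1}D^{-1}b & D^{-1}+S_{\lambda}^{-1}D^{-1}b\,b^{\top}D^{-1} \end{pmatrix},
\end{equation*}
and extract the entries named in Definition~\ref{def:rpc}. The $(1,1)$ entry is $S_{\lambda}^{-1}$, giving $v_{y,\lambda}=S_{\lambda}^{-1}$. Since the matrix is symmetric, the $(1,i+1)$ entry equals the $i$th component of $-S_{\lambda}^{-1}D^{-1}b=-S_{\lambda}^{-1}(X^{\top}X+\lambda I)^{-1}X^{\top}\tilde{Y}$, i.e.\ the stated expression for $v_{iy,\lambda}$. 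And the $(i+1,i+1)$ entry is the $i$th diagonal element of $D^{-1}+S_{\lambda}^{-1}D^{-1}b\,b^{\top}D^{-1}=(X^{\top}X+\lambda I)^{-1}+S_{\lambda}^{-1}(X^{\top}X+\lambda I)^{-1}X^{\top}\tilde{Y}\tilde{Y}^{\top}X(X^{\top}X+\lambda I)^{-1}$, i.e.\ the stated expression for $v_{i,\lambda}$.

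I do not expect any genuine analytic obstacle: the argument is essentially bookkeeping once the Schur complement is recognized as $S_{\lambda}$. The only two points that need attention are the index shift---the response occupies coordinate $1$ and predictor $i$ occupies coordinate $i+1$ of the matrix in \eqref{eq:ridge}---and the overall scalar factor $n$ in \eqref{eq:ridge}, which multiplies each of $v_{iy,\lambda}$, $v_{i,\lambda}$, $v_{y,\lambda}$ and therefore cancels in the ratio defining $R_{i,\lambda}$ in \eqref{eq:rpc}, so it may be dropped throughout without affecting the screening and the lemma is stated accordingly. As a cross-check one could instead obtain $v_{y,\lambda}$ first by a Woodbury manipulation as in \eqref{comp:vy} and then recover $v_{iy,\lambda}$, $v_{i,\lambda}$ from the rank-one correction, but the Schur-complement route produces all three formulas simultaneously and is cleanest.
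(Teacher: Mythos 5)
Your proposal is correct and follows essentially the same route as the paper: both apply the standard block (Schur-complement) inversion formula to the matrix in \eqref{eq:ridge}, identify the Schur complement of the $(X^{\top}X+\lambda I)$-block as $S_{\lambda}$, and read off the three entries. Your explicit handling of the overall factor $n$ (noting it cancels in $R_{i,\lambda}$) is a welcome clarification, since the paper states $S_\lambda = n/v_{y,\lambda}$ in the main text but $v_{y,\lambda}=1/S_\lambda$ in the lemma without comment.
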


\begin{proof}
By block matrix inverse formula, 
\begin{align*}
    \bm{R} &= n\begin{pmatrix}
\tilde{Y}^{\top}\tilde{Y} &\tilde{Y}^{\top}X\\
X^{\top}\tilde{Y} &X^{\top}X + \lambda I
\end{pmatrix}^{-1}\\
&=n\begin{pmatrix}
\bm{A_\lambda} &\bm{B_\lambda}\\
\bm{C_\lambda} &\bm{D_\lambda}
\end{pmatrix}
\end{align*}

where
\begin{equation*}
    \begin{split}
        &\bm{A_\lambda} = \dfrac{1}{S_\lambda},\\
        &\bm{B_\lambda} = -\dfrac{\tilde{Y}^{\top}X(X^{\top}X + \lambda I)^{-1}}{S_\lambda},\\
        &\bm{C_\lambda} = -\dfrac{(X^{\top}X + \lambda I)^{-1}X^{\top}\tilde{Y}}{S_\lambda}, \; \mbox{and}\\
        &\bm{D_\lambda} = (X^{\top}X + \lambda I)^{-1} + \dfrac{(X^{\top}X + \lambda I)^{-1}X^{\top}\tilde{Y}\tilde{Y}^{\top}X (X^{\top}X + \lambda I)^{-1}}{S_\lambda}.
    \end{split}
\end{equation*}

Then the proof follows from Definition \ref{def:rpc} and an application of the standard block matrix inversion formula. 
\end{proof}

The following lemma is a well known result in linear algebra. 
\begin{lemma}
\label{lemma:sumofeigen}
For any $n\times n$ positive semidefinite matrices $A$ and $B$, we have $\alpha_{max}(A+B)\leq \alpha_{max}(A) + \alpha_{max}(B), \alpha_{max}(AB)\leq \alpha_{max}(A)\alpha_{max}(B), \alpha_{min}(AB)\geq \alpha_{min}(A)\alpha_{min}(B).$
\end{lemma}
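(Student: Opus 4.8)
The plan is to reduce all three inequalities to the variational (Courant--Fischer) characterization of the extreme eigenvalues of a symmetric matrix, namely $\alpha_{max}(S) = \max_{\|x\|=1} x^\top S x$ and $\alpha_{min}(S) = \min_{\|x\|=1} x^\top S x$ for symmetric $S$. The first inequality follows at once from this; the two product inequalities need one extra ingredient, because $AB$ is not symmetric in general, so I would first replace it by a symmetric positive semidefinite matrix having the same spectrum.

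For the sum, since $A+B$ is symmetric positive semidefinite I would write
\[\alpha_{max}(A+B) = \max_{\|x\|=1}\left(x^\top A x + x^\top B x\right) \leq \max_{\|x\|=1} x^\top A x + \max_{\|x\|=1} x^\top B x = \alpha_{max}(A)+\alpha_{max}(B),\]
using only that the maximum of a sum is bounded by the sum of the maxima.

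For the products, the key observation is that $AB = A^{1/2}\,(A^{1/2}B)$, where $A^{1/2}$ is the unique positive semidefinite square root of $A$; hence $AB$ has the same characteristic polynomial as $(A^{1/2}B)A^{1/2} = A^{1/2}BA^{1/2}$. Since the latter matrix is symmetric positive semidefinite, $AB$ has real nonnegative eigenvalues and $\alpha_{max}(AB)=\alpha_{max}(A^{1/2}BA^{1/2})$, $\alpha_{min}(AB)=\alpha_{min}(A^{1/2}BA^{1/2})$. Setting $w = A^{1/2}x$ and using $\alpha_{min}(B)\|w\|^2 \leq w^\top B w \leq \alpha_{max}(B)\|w\|^2$ together with $\|w\|^2 = x^\top A x$, I would obtain, for every $x$ with $\|x\|=1$, the pointwise bounds $x^\top A^{1/2}BA^{1/2}x = w^\top B w \leq \alpha_{max}(B)\,x^\top A x \leq \alpha_{max}(A)\alpha_{max}(B)$ and $x^\top A^{1/2}BA^{1/2}x \geq \alpha_{min}(B)\,x^\top A x \geq \alpha_{min}(A)\alpha_{min}(B)$; taking the maximum and the minimum over $\|x\|=1$ then yields the two claimed product inequalities.

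The only nonroutine point is the justification that $\alpha_{max}$ and $\alpha_{min}$ of the generally nonsymmetric product $AB$ are well defined and coincide with those of the symmetric matrix $A^{1/2}BA^{1/2}$, which rests on the identity of characteristic polynomials of $PQ$ and $QP$ for square $P,Q$. I expect this similarity step to be the one thing worth stating carefully; once it is in place, both chained inequalities are immediate from the Rayleigh-quotient bounds above.
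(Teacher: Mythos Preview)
Your argument is correct. The sum inequality follows immediately from the Rayleigh-quotient characterization, and for the product inequalities your reduction of $AB$ to the symmetric matrix $A^{1/2}BA^{1/2}$ via the identity of characteristic polynomials of $PQ$ and $QP$ (with $P=A^{1/2}$, $Q=A^{1/2}B$) is exactly the right device; the pointwise Rayleigh bounds then give both the upper and lower product estimates without further work.

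As for comparison with the paper: the paper does not actually prove this lemma but simply refers the reader to \cite{johnson:1985}. Your self-contained proof via Courant--Fischer and the similarity $AB\sim A^{1/2}BA^{1/2}$ therefore supplies what the paper omits. The only point worth emphasizing, which you already flag, is that $AB$ need not be symmetric, so one must justify that its spectrum is real and nonnegative before speaking of $\alpha_{max}(AB)$ and $\alpha_{min}(AB)$; your similarity step does precisely this.
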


\begin{proof}
A proof can be found in \cite{johnson:1985}.
\end{proof}

The following lemma is the Lemma 4 in \cite{wang:leng:2016}.
\begin{lemma}[\cite{wang:leng:2016}]
\label{lemma:Hnormbound}
Under assumptions A1 - A3, for any $C > 0$ and any fixed vector $v$ with $\|v\| = 1$, there exist $c_1', c_2'$ only related to $C$ with $0<c_1'<1<c_2'$ such that
\[P\left(v^{\top}HH^{\top}v < c_1'\dfrac{n^{1-\tau}}{p}\quad \text{or}\quad v^{\top}HH^{\top}v > c_2'\dfrac{n^{1+\tau}}{p}\right) < 4e^{-Cn},\]
for $H$ defined in Section~\ref{app:notation}. In particular, for $v=\beta$, a similar inequality holds for one side with a new $c_2'$ as
\[P\left(\beta^{\top}HH^{\top}\beta > c_2'\dfrac{n^{1+\tau}}{p}\right)<2e^{-Cn}.\]
\end{lemma}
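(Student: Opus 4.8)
The plan is to reduce $v^{\top}HH^{\top}v=\|H^{\top}v\|^2$ to the squared length of a single random projection and then apply a chi-square concentration bound. Setting $w=\Sigma^{1/2}v$ and using $H=\Sigma^{1/2}U(U^{\top}\Sigma U)^{-1/2}$, I would first write
\[v^{\top}HH^{\top}v=w^{\top}U(U^{\top}\Sigma U)^{-1}U^{\top}w.\]
Since $U^{\top}U=I_n$, the bounds $\alpha_{min}(\Sigma)I_n\preceq U^{\top}\Sigma U\preceq\alpha_{max}(\Sigma)I_n$ give
\[\alpha_{max}(\Sigma)^{-1}\|U^{\top}w\|^2\le v^{\top}HH^{\top}v\le\alpha_{min}(\Sigma)^{-1}\|U^{\top}w\|^2.\]
Because $\Sigma$ is a correlation matrix its eigenvalues average to one, so $\text{cond}(\Sigma)\le c_4n^{\tau}$ forces $\alpha_{min}(\Sigma)\ge(c_4n^{\tau})^{-1}$ and $\alpha_{max}(\Sigma)\le c_4n^{\tau}$, while $\|w\|^2=v^{\top}\Sigma v\in[\alpha_{min}(\Sigma),\alpha_{max}(\Sigma)]$. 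Pairing each eigenvalue prefactor with $\|w\|^2$ — note $\alpha_{min}(\Sigma)^{-1}\|w\|^2\le\text{cond}(\Sigma)$ and $\alpha_{max}(\Sigma)^{-1}\|w\|^2\ge\text{cond}(\Sigma)^{-1}$ — shows that only a single power $\text{cond}(\Sigma)\le c_4n^{\tau}$, i.e. the factors $n^{\pm\tau}$, can appear; everything now rests on $\|U^{\top}w\|^2=w^{\top}UU^{\top}w$.

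Next I would identify the law of $UU^{\top}$. By A2, $z=\Sigma^{-1/2}x$ is spherically symmetric, so the rows of $Z=X\Sigma^{-1/2}$ are rotation invariant and $Z\stackrel{d}{=}ZQ^{\top}$ for every $Q\in\mathcal{O}(p)$. Hence the $n$-dimensional row space of $Z$, onto which $UU^{\top}$ is the orthogonal projection, is uniformly distributed on the Grassmannian and $UU^{\top}\stackrel{d}{=}QUU^{\top}Q^{\top}$. Rotating $w$ to $\|w\|e_1$ then gives $w^{\top}UU^{\top}w\stackrel{d}{=}\|w\|^2\eta$, where $\eta=e_1^{\top}UU^{\top}e_1$ is the squared length of the projection of a fixed unit vector onto a uniform $n$-subspace of $\mathbb{R}^p$. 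This has the exact law $\mathrm{Beta}(n/2,(p-n)/2)$, equivalently $\eta\stackrel{d}{=}A/(A+B)$ with independent $A\sim\chi^2_n$ and $B\sim\chi^2_{p-n}$, so that $\E(\eta)=n/p$; only rotation invariance, not Gaussianity, is used here.

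The core estimate is a chi-square concentration bound. For constants $0<a_0<1<a_1$ and $0<b_0<1<b_1$, on the event $\{a_0n\le A\le a_1n,\ b_0(p-n)\le B\le b_1(p-n)\}$ the ratio $\eta=A/(A+B)$ lies in a constant-multiple window of $n/p$ (using $p\succ n$), and its complement has probability at most the sum of four Laurent--Massart tails, two for $A\sim\chi^2_n$ and two for $B\sim\chi^2_{p-n}$. Given $C>0$, I would widen the window through the $\chi^2_n$ tails — taking $a_0$ small and $a_1$ large so that $\exp\{-\tfrac n2(a_0-1-\log a_0)\}$ and $\exp\{-\tfrac n2(a_1-1-\log a_1)\}$ are each $\le e^{-Cn}$ — whereas the $\chi^2_{p-n}$ tails are $\le e^{-Cn}$ automatically, since A4 gives $p\succ n^{1+\tau}$ and hence $p-n\gg n$. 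Feeding the resulting two-sided control of $\eta$ back through the paragraph-one sandwich and absorbing $\text{cond}(\Sigma)\le c_4n^{\tau}$ into new constants $c_1'<1<c_2'$ yields
\[P\bigl(v^{\top}HH^{\top}v<c_1'n^{1-\tau}/p\ \text{or}\ v^{\top}HH^{\top}v>c_2'n^{1+\tau}/p\bigr)<4e^{-Cn}.\]

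Finally, for $v=\beta$ I would keep the argument one-sided and, crucially, avoid bounding $\|\beta\|^2$. Here $\|U^{\top}\Sigma^{1/2}\beta\|^2=(\Sigma^{1/2}\beta)^{\top}UU^{\top}(\Sigma^{1/2}\beta)$ concentrates around $\|\Sigma^{1/2}\beta\|^2\,n/p=(\beta^{\top}\Sigma\beta)\,n/p$, and by A3 $\beta^{\top}\Sigma\beta=\text{var}(x^{\top}\beta)\le\text{var}(y)=O(1)$. Combining the upper bound $\beta^{\top}HH^{\top}\beta\le\alpha_{min}(\Sigma)^{-1}\|U^{\top}\Sigma^{1/2}\beta\|^2$ with $\alpha_{min}(\Sigma)^{-1}\le c_4n^{\tau}$ and only the upper $\chi^2$ tail then gives $\beta^{\top}HH^{\top}\beta\le c_2'n^{1+\tau}/p$ off a set of probability $<2e^{-Cn}$; using $\text{var}(x^{\top}\beta)=O(1)$ here, rather than $\|\beta\|^2=O(n^{\tau})$, is exactly what prevents a spurious extra factor of $n^{\tau}$. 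The main delicacies I anticipate are (i) pairing the $\Sigma$-eigenvalue factors so that only one power of $n^{\tau}$ survives, (ii) routing the $C$-dependence of $c_1',c_2'$ through the $\chi^2_n$ rather than the $\chi^2_{p-n}$ tails, and (iii) this variance bookkeeping in the $v=\beta$ case.
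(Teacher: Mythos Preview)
The paper does not supply a proof of this lemma; it is simply quoted as Lemma~4 of \cite{wang:leng:2016}. Your proposal therefore cannot be compared to an in-paper argument, but it is a correct and natural reconstruction: sandwich $v^\top HH^\top v$ between $\text{cond}(\Sigma)^{\pm 1}$ times $\|U^\top w\|^2/\|w\|^2$, identify the latter as $\text{Beta}(n/2,(p-n)/2)$ via the rotation invariance of the row space of $Z$ under A2, and apply chi-square tail bounds. Your treatment of the $v=\beta$ case---using $\beta^\top\Sigma\beta=\text{var}(x^\top\beta)\le\text{var}(y)=O(1)$ directly rather than routing through $\|\beta\|^2$---is exactly the right bookkeeping and is what keeps the exponent at $n^{1+\tau}$ rather than $n^{1+2\tau}$.

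One small slip: you invoke A4 ($p\succ n^{1+\tau}$) to make the $\chi^2_{p-n}$ tails ``$\le e^{-Cn}$ automatically,'' but the lemma is stated under A1--A3 only. The fix is painless: let $b_0,b_1$ also depend on $C$ (choose them so that the chi-square large-deviation exponent exceeds $C$), giving each $\chi^2_{p-n}$ tail at most $e^{-C(p-n)}\le e^{-Cn}$ once $p\ge 2n$, which is implicit in the ultrahigh-dimensional regime the paper works in. The constants $c_1',c_2'$ then absorb $a_0,a_1,b_0,b_1,c_4$ and depend only on $C$, as stated.
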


\section{Proof of Theorem~\ref{thm:consistency}}
\label{app:consistency}
We first provide a sketch of the proof. By \eqref{eq:rpc} in Definition \ref{def:rpc} and Lemma \ref{lemma:rpcrewrite}, the ridge partial correlation can be rewritten as
\begin{equation}
    \label{eq:rpcrewrite}
    R_{i,\lambda} = -\dfrac{\hat{\beta}_i(\lambda)}{\left\{S_{\lambda}\xi_i + \hat{\beta}_i(\lambda)^2\right\}^{1/2}},
  \end{equation}
  where $\hat{\beta}_i(\lambda)$ is the $i$th element of the ridge regression estimator $\hat{\beta}(\lambda)$ and $\xi_i$ is the $i$th diagonal element of $(X^{\top}X + \lambda I)^{-1}$ as defined in Section~\ref{app:notation}.
  Thus, it follows that \[\dfrac{S_\lambda R_{i,\lambda}^2}{1-R_{i,\lambda}^2} = \dfrac{\hat{\beta}_i^2(\lambda)}{\xi_i}.\]
  By Theorem 3 in \cite{wang:leng:2016}, we have a separation on $|\hat{\beta}_i (\lambda)|$ for $i\in t$ and $i\notin t$. Therefore if we can show that $\lambda\xi_i\stackrel{p}{\to} 1$ uniformly, then we will have a separation on 
$S_\lambda R_{i,\lambda}^2/(1-R_{i,\lambda}^2)$ as stated in Theorem~\ref{thm:consistency}. Next, we provide a proof of Theorem~\ref{thm:consistency}.

\begin{proof}

From Section~\ref{app:notation}, since $X = VDU^{\top}\Sigma^{1/2}$, we have
\begin{equation}
\label{proof:ridgeproj}
    \begin{split}
        X^{\top}(XX^{\top}+\lambda I)^{-1}X &= \Sigma^{1/2}UDV^{\top}(VDU^{\top}\Sigma UDV^{\top}+\lambda I)^{-1}VDU^{\top}\Sigma^{1/2}\\
        &=\Sigma^{1/2}U(U^{\top}\Sigma U+\lambda D^{-2})^{-1}U^{\top}\Sigma^{1/2}\\
        &=\Sigma^{1/2}UA^{-1}(I+\lambda A^{-\top}D^{-2}A^{-1})^{-1}A^{-\top}U^{\top}\Sigma^{1/2}.
    \end{split}
\end{equation}

In order to use Taylor expansion on the inverse matrix in \eqref{proof:ridgeproj}, we need to consider the maximum eigenvalue of $A^{-\top}D^{-2}A^{-1}$. By Lemma \ref{lemma:sumofeigen}, 
\[\alpha_{\max}(A^{-\top}D^{-2}A^{-1})\leq \alpha_{\max}(D^{-2})\alpha_{\max}((AA^{\top})^{-1}) = \alpha_{\min}(D^2)^{-1}\alpha_{\min}(U^{\top}\Sigma U)^{-1}.\]
From the assumption A2, we have
\[P(p^{-1}\alpha_{\min}(D^2)<c_1^{-1})<e^{-C_1n}\]
for some $c_1>1$ and $C_1>0$. Also, by Lemma \ref{lemma:sumofeigen} and the assumption A3, we have
\[\alpha_{\min}(U^{\top}\Sigma U)\geq \alpha_{\min}(\Sigma)\alpha_{\min}(U^{\top}U)\geq c_4^{-1}n^{-\tau}.\]
Therefore, with probability greater than $1-e^{-C_1n}$, we have
\begin{equation}
    \label{proof:maxeigvalue}
    \alpha_{\max}(\lambda A^{-\top}D^{-2}A^{-1})\leq \dfrac{c_1c_4\lambda n^{\tau}}{p} < 1
\end{equation}
since by the assumption A4, $p\succ n^{1+\tau}$ and $\lambda \prec n^{1-(5/2)\tau-\kappa}$.

Thus, the Taylor expansion gives that
\begin{equation}
\label{proof:HH+M}
    \begin{split}
  &\Sigma^{1/2}UA^{-1}(I+\lambda A^{-\top}D^{-2}A^{-1})^{-1}A^{-\top}U^{\top}\Sigma^{1/2} \\
  &= \Sigma^{1/2}UA^{-1}\Bigg(I+\sum_{k=1}^\infty \lambda^k(A^{-\top}D^{-2}A^{-1})^k\Bigg)A^{-\top}U^{\top}\Sigma^{1/2}  \\
  &=HH^{\top}+M,
\end{split}
\end{equation}
where $H$ and $M$ are as defined in Appendix~\ref{app:notation}.

By Lemma \ref{lemma:sumofeigen}, we have 
\begin{equation}
\label{proof:maxwigenM}
\begin{split}
    &\alpha_{\max}(\Sigma^{1/2}UA^{-1}(A^{-\top}D^{-2}A^{-1})^kA^{-\top}U^{\top}\Sigma^{1/2})\\
  &\leq \alpha_{\max}(\Sigma^{1/2}UA^{-1}A^{-\top}U^{\top}\Sigma^{1/2})\alpha_{\max}(A^{-\top}D^{-2}A^{-1})^k\\
  &=\alpha_{\max}(HH^{\top})\alpha_{\max}(A^{-\top}D^{-2}A^{-1})^k\\
  &\leq \alpha_{\max}(A^{-\top}D^{-2}A^{-1})^k.
\end{split}
\end{equation}

Combining \eqref{proof:maxeigvalue}, \eqref{proof:maxwigenM} and Lemma \ref{lemma:sumofeigen}, we have an upper bound for the largest eigenvalue of $M$ given by
\begin{equation}
    \label{proof:maxeigvalueM}
    \alpha_{\max}(M)\leq\sum_{k=1}^\infty \lambda^k\alpha_{\max}(A^{-\top}D^{-2}A^{-1})^k\leq\dfrac{\lambda\alpha_{\max}(A^{-\top}D^{-2}A^{-1})}{1-\lambda\alpha_{\max}(A^{-\top}D^{-2}A^{-1})}.
\end{equation}

Since $g/(1-g)$ is an increasing function of $g$, from \eqref{proof:maxeigvalue}, \eqref{proof:maxeigvalueM} and the assumption A4, we have
\begin{equation}
\label{proof:maxeigenvalueMupperbd}
    \begin{split}
        P\left(\alpha_{\max}(M) > \dfrac{c_1c_4\lambda n^{\tau}}{p-c_1c_4\lambda n^{\tau}}\right) &< P\left(\dfrac{\lambda\alpha_{\max}(A^{-\top}D^{-2}A^{-1})}{1-\lambda\alpha_{\max}(A^{-\top}D^{-2}A^{-1})}>\dfrac{c_1c_4\lambda n^{\tau}}{p-c_1c_4\lambda n^{\tau}}\right)\\
        &=P\left(\alpha_{\max}(A^{-\top}D^{-2}A^{-1}) > \dfrac{c_1c_4n^{\tau}}{p}\right)<e^{-C_1n}.
    \end{split}
\end{equation}

By the Woodbury identity, \eqref{proof:ridgeproj}, and \eqref{proof:HH+M}, we have
\begin{equation}
\label{eq:xxt}
    (X^{\top}X + \lambda I)^{-1} = \dfrac{1}{\lambda}I-\dfrac{1}{\lambda}X^{\top}(XX^{\top}+\lambda I)^{-1}X = \dfrac{1}{\lambda}I-\dfrac{1}{\lambda}(HH^{\top}+M),
\end{equation}
and hence 
\begin{equation}
\label{eq:xtxii}
    \xi_i = \dfrac{1}{\lambda} - \dfrac{1}{\lambda}e_i^{\top}HH^{\top}e_i - \dfrac{1}{\lambda}e_i^{\top}Me_i,
  \end{equation}
  where $e_i = (0,\dots,0,1,0,\dots,0)^{\top}$ is the $i$th vector in the standard basis for $\mathcal{R}^p$.
By Lemma \ref{lemma:Hnormbound}, we have
\begin{equation}
    \label{eq:eiH}
    P\left(e_i^{\top}HH^{\top}e_i < c_2'\dfrac{n^{1+\tau}}{p}\right) > 1 - 4e^{-Cn}
\end{equation}
uniformly for all $i$. Also, by \eqref{proof:maxeigenvalueMupperbd}, we have 
\begin{equation}
\label{eq:eiM}
    P\left(e_i^{\top}Me_i > \dfrac{c_1c_4\lambda n^{\tau}}{p-c_1c_4\lambda n^{\tau}}\right)\leq P\left(\alpha_{\max}(M) > \dfrac{c_1c_4\lambda n^{\tau}}{p-c_1c_4\lambda n^{\tau}}\right) < e^{-C_1n}
\end{equation}
uniformly for all $i$.

Combining \eqref{eq:eiH} and \eqref{eq:eiM} and using Bonferroni's inequality, we have
\begin{equation}
\label{eq:maxi}
    P\left(\max_i(e_i^{\top}HH^{\top}e_i + e_i^{\top}Me_i) > c_2'\dfrac{n^{1+\tau}}{p} + \dfrac{c_1c_4\lambda n^{\tau}}{p-c_1c_4\lambda n^{\tau}}\right)
    \leq p(4e^{-Cn}+e^{-C_1n})\to 0
\end{equation}
as $n\to\infty$. From \eqref{eq:xtxii}, we have $\xi_{i} < 1/\lambda$. Thus, by \eqref{eq:xtxii} and \eqref{eq:maxi}, we have 
\begin{equation}
    \label{proof:maxlamxi}
    P\left(\max_{i}(|\lambda \xi_{i}-1|) > c_2'\dfrac{n^{1+\tau}}{p} + \dfrac{c_1c_4\lambda n^{\tau}}{p-c_1c_4\lambda n^{\tau}}\right) \leq p(4e^{-Cn}+e^{-C_1n}).
\end{equation}
By the assumption A4, 
\[\dfrac{n^{1+\tau}}{p} \to 0, \quad\text{and}\quad \dfrac{c_1c_4\lambda n^{\tau}}{p-c_1c_4\lambda n^{\tau}}\to 0\]
as $n\to\infty$, 
which means that $\lambda \xi_i\stackrel{p}{\to} 1$ uniformly.

We denote $c_2'n^{1+\tau}/p + c_1c_4\lambda n^{\tau}/(p-c_1c_4\lambda n^{\tau})$ by $\eta_{n}$ which is an infinitesimal.  
Let $\{\rho_{n}\}_{n \ge 1}$ be a sequence satisfying 
\begin{equation}
  \label{eq:defrho}
\dfrac{1}{\sqrt{1-\eta_n}}\left(\dfrac{\tilde{c}+\sigma\sqrt{2C_1c_1c_2'c_4}}{\sqrt{\log n}} + \dfrac{c_1c_4\sqrt{c'c_4}\lambda n^{5/2\tau + \kappa -1}}{1-c_1c_4\lambda n^{\tau}/p}\right)\dfrac{n^{1-\tau-\kappa}}{p}<\sqrt{\dfrac{\rho_{n}^2}{\lambda}} < \dfrac{c}{4}\dfrac{n^{1-\tau-\kappa}}{p},  
\end{equation}
where $c$ and $\tilde{c}$ are constants as in the proof of Theorem 3 in \cite{wang:leng:2016}.
Since 
\[\lim\limits_{n\to 0} \dfrac{1}{\sqrt{1-\eta_n}}\left(\dfrac{\tilde{c}+\sigma\sqrt{2C_1c_1c_2'c_4}}{\sqrt{\log n}} + \dfrac{c_1c_4\sqrt{c'c_4}\lambda n^{5/2\tau + \kappa -1}}{1-c_1c_4\lambda n^{\tau}/p}\right) = 0,\]
the $\rho_n$ in \eqref{eq:defrho} always exists.

Note that $1/R^2_{i,\lambda} = 1 + \xi_{i}\times S_{\lambda} / \hat{\beta}_i(\lambda)^2$. By the proof of Theorem 3 in \cite{wang:leng:2016} (see Supplementary of \cite{wang:leng:2016}), \eqref{proof:maxlamxi}, and the assumption A4, we have 
\begin{equation}
\label{proof:upperbound}
    \begin{split}
      P\left(\max_{i\notin t}\dfrac{S_\lambda R_{i,\lambda}^2}{1-R_{i,\lambda}^2} > \rho^2_{n}\right) &= P\left(\max_{i\notin t}\dfrac{\hat{\beta}_i^2(\lambda)}{\xi_i} > \rho_n^2\right)\\
      &\le P\left(\dfrac{\max_{i\notin t}\hat{\beta}_i^2(\lambda)}{\min_{i\notin t}\xi_i} > \rho_n^2\right)\\
      &= P\left(\max_{i\notin t}\hat{\beta}^2_i(\lambda) > \rho_n^2 \min_{i\notin t} \xi_{i}\right)\\
        &\leq P\left(\max_{i\notin t} \hat{\beta}^2_i(\lambda) > \dfrac{\rho^2(1-\eta_{n})}{\lambda}, \min_{i\notin t}\lambda \xi_{i} > 1-\eta_{n}\right) + P\left(\min_{i\notin t} \lambda \xi_{i} < 1-\eta_{n}\right)\\
        &\leq P\left(\max_{i\notin t}\hat{\beta}^2_i(\lambda) > (1-\eta_{n}) \dfrac{\rho^2_{n} }{\lambda}\right) + P\left(\max_{i\notin t}(1- \lambda \xi_{i}) > \eta_{n}\right)\\
        &\leq P\left(\max_{i\notin t}\hat{\beta}^2_i(\lambda) > \left[\left(\dfrac{\tilde{c}+\sigma\sqrt{2C_1c_1c_2'c_4}}{\sqrt{\log n}} + \dfrac{c_1c_4\sqrt{c'c_4}\lambda n^{5/2\tau + \kappa -1}}{1-c_1c_4\lambda n^{\tau}/p}\right)\dfrac{n^{1-\tau-\kappa}}{p}\right]^2\right)\\
        &+ P\left(\max_{i\notin t}(1- \lambda \xi_{i}) > \eta_{n}\right)\\
        &\leq O\bigg(\exp\bigg\{-C_1\dfrac{n^{1-5\tau-2\kappa-\nu}}{2\log(n)}\bigg\} \\
        &+\exp\bigg[1-\dfrac{1}{2}q\bigg(\dfrac{\sqrt{C_1}n^{1/2-2\tau-\kappa}}{\sqrt{\log(n)}}\bigg)\bigg]\bigg) + p(4e^{-Cn}+e^{-C_1n}).
    \end{split}
\end{equation}

Similarly, since $\xi_i < 1/\lambda$, we have
\begin{equation}
\label{proof:lowerbound}
\begin{split}
  P\left(\min_{i\in t}\dfrac{S_\lambda R_{i,\lambda}^2}{1-R_{i,\lambda}^2} < \rho^2_{n}\right) &= P\left(\min_{i\in t}\dfrac{\hat{\beta}_i^2(\lambda)}{\xi_i} <\rho_n^2\right)\\
  &= P\left(\dfrac{\min_{i\in t}\hat{\beta}_i^2(\lambda)}{\max_{i\in t} \xi_i} <\rho_n^2\right)\\
    &\leq P\left(\min_{i\in t}\hat{\beta}^2_i(\lambda) < \dfrac{\rho^2_{n} }{\lambda}\right)\\
    &< P\left(\min_{i\in t}\hat{\beta}^2_i(\lambda) < \left[\dfrac{c}{4}\dfrac{n^{1-\tau-\kappa}}{p}\right]^2\right)\\
    &<O\bigg(\exp\bigg\{-C_1\dfrac{n^{1-5\tau-2\kappa-\nu}}{2\log(n)}\bigg\} +
  \exp\bigg[1-\dfrac{1}{2}q\bigg(\dfrac{\sqrt{C_1}n^{1/2-2\tau-\kappa}}{\sqrt{\log(n)}}\bigg)\bigg]\bigg),
\end{split}
\end{equation}
where the last inequality follows from the proof of Theorem 3 in \cite{wang:leng:2016}.

By the assumption A4, $\log(p) \prec n^{1-2\kappa-5\tau}/[2\log (n)]$ and hence, 
\[p(4e^{-Cn}+e^{-C_1n})\prec \exp\bigg\{-C_1\dfrac{n^{1-5\tau-2\kappa-\nu}}{2\log(n)}\bigg\}.\]
This result combined with \eqref{proof:upperbound} and \eqref{proof:lowerbound} gives 
\begin{equation*}
    \begin{split}
        P\bigg(\min_{i\in t} \dfrac{S_\lambda R_{i,\lambda}^2}{1-R_{i,\lambda}^2}&> \rho^2_{n} > \max_{i\notin t}\dfrac{S_\lambda R_{i,\lambda}^2}{1-R_{i,\lambda}^2}\bigg)\\
        &= 1 - O\bigg(\exp\bigg\{-C_1\dfrac{n^{1-5\tau-2\kappa-\nu}}{2\log(n)}\bigg\} +
  \exp\bigg[1-\dfrac{1}{2}q\bigg(\dfrac{\sqrt{C_1}n^{1/2-2\tau-\kappa}}{\sqrt{\log(n)}}\bigg)\bigg]\bigg).
        \end{split}
\end{equation*}

Since $g/(1-g)$ is an increasing function of $g$, if we choose a submodel with size $K_n\geq |t|$, we will
have
\[P(t\subset \gamma^{K_n}) = 1- O\left[\exp\left\{-C_1\dfrac{n^{1-2\kappa-5\tau-\nu}}{2\log(n)}\right\} + \exp\left\{1-\dfrac{1}{2}q\left(\dfrac{\sqrt{C_1}n^{1/2-2\tau-\kappa}}{\sqrt{\log(n)}}\right)\right\}\right]\]
which completes the whole proof.
\end{proof}

\section{Additional simulation results}
\label{app:simu}
In this section, we present results from simulation settings corresponding to different $R^2$ values and error distributions. The results are reported as percentages.
\begin{center}
\begin{small}
\topcaption{\label{tab:addsimulationnorm} Additional simulation results for normally distributed errors}
\tablefirsthead{\hline\hline\multicolumn{1}{c}{Method}    & \multicolumn{12}{c}{Correlation Structure}\\}
\tablelasttail{\hline\hline}

\begin{supertabular}{rrr|rr|rr|rr|rr|rr|rr}
\hline\multicolumn{1}{l}{}                        & \multicolumn{12}{c}{Theoretical $R^2 = 0.1$}
                                                                                                                                                    \\\hline
                                                  & \multicolumn{2}{c|}{IID}      & \multicolumn{2}{c|}{Compound} & \multicolumn{2}{c|}{Group}    & \multicolumn{2}{c|}{AR}       & \multicolumn{2}{c|}{Factor}  & \multicolumn{2}{c|}{ExtrCor}  & \multicolumn{2}{c}{SpFactor}\\
                           & TPR            & CP         & TPR           & CP          & TPR            & CP         & TPR            & CP         & TPR           & CP         & TPR           & CP    &TPR      &CP      \\\hline
                           \multicolumn{1}{l}{RPC1}                            & 43.4 & 0 & 9.9 & 0 & 82.2 & 46 & 75.4 & 18 & 12.1 & 0 & 66.2 & 3 & 38.8 & 0 \\
                        \multicolumn{1}{l}{RPC2}    & 43.6 & 0 & 9.9 & 0 & 82.3 & 46 & 75.6 & 18 & 12.1 & 0 & 66.4 & 3 & 38.9 & 0 \\
                        \multicolumn{1}{l}{RPC3}    & 43.4 & 0 & 9.9 & 0 & 82.2 & 46 & 75.3 & 18 & 12.1 & 0 & 66.6 & 3 & 38.7 & 0 \\
                        \multicolumn{1}{l}{URPC}    & 43.6 & 0 & 9.9 & 0 & 82.3 & 46 & 75.6 & 18 & 12.2 & 0 & 67.0 & 3 & 38.9 & 0 \\
                        \multicolumn{1}{l}{UBITS}   & 45.6 & 0 & 17.3 & 0 & 87.8 & 64 & 69.4 & 5 & 20.1 & 0 & 16.3 & 0 & 55.3 & 1 \\
                        \multicolumn{1}{l}{HOLP}    & 43.4 & 0 & 9.9 & 0 & 82.4 & 47 & 75.3 & 18 & 12.1 & 0 & 59.2 & 1 & 39.1 & 0 \\
                        \multicolumn{1}{l}{FR}      & 15.4 & 0 & 6.2 & 0 & 21.0 & 0 & 21.0 & 0 & 7.9 & 0 & 9.1 & 0 & 9.3 & 0 \\
                        \multicolumn{1}{l}{SIS}     & 46.2 & 0 & 14.2 & 0 & 88.4 & 63 & 83.2 & 29 & 20.0 & 0 & 1.9 & 0 & 48.3 & 0 \\\hline
\multicolumn{1}{l}{}                        & \multicolumn{12}{c}{Theoretical $R^2 = 0.3$}                                                                                                                                                         \\\hline
                        & \multicolumn{2}{c|}{IID}      & \multicolumn{2}{c|}{Compound} & \multicolumn{2}{c|}{Group}    & \multicolumn{2}{c|}{AR}       & \multicolumn{2}{c|}{Factor}  & \multicolumn{2}{c|}{ExtrCor}  & \multicolumn{2}{c}{SparseFactor}\\
                           & TPR            & CP         & TPR           & CP          & TPR            & CP         & TPR            & CP         & TPR           & CP         & TPR           & CP    &TPR      &CP      \\\hline
\multicolumn{1}{l}{RPC1} & 89 & 33 & 24 & 0 & 100 & 100 & 100 & 98 & 18 & 0 & 92 & 46 & 63 & 2 \\
\multicolumn{1}{l}{RPC2} & 89 & 33 & 24 & 0 & 100 & 100 & 100 & 98 & 18 & 0 & 92 & 45 & 64 & 2 \\
\multicolumn{1}{l}{RPC3} & 89 & 33 & 24 & 0 & 100 & 100 & 100 & 98 & 18 & 0 & 92 & 46 & 63 & 2 \\
\multicolumn{1}{l}{URPC} & 89 & 33 & 24 & 0 & 100 & 100 & 100 & 98 & 18 & 0 & 92 & 47 & 64 & 2 \\
\multicolumn{1}{l}{UBITS} & 90 & 35 & 29 & 0 & 100 & 100 & 98 & 85 & 29 & 0 & 62 & 6 & 74 & 6 \\
\multicolumn{1}{l}{HOLP} & 89 & 33 & 24 & 0 & 100 & 100 & 100 & 98 & 18 & 0 & 90 & 37 & 64 & 2 \\
\multicolumn{1}{l}{FR} & 57 & 0 & 10 & 0 & 37 & 0 & 42 & 0 & 8 & 0 & 19 & 0 & 16 & 0 \\
\multicolumn{1}{l}{SIS} & 90 & 39 & 35 & 0 & 100 & 100 & 100 & 99 & 21 & 0 & 0 & 0 & 68 & 2 \\\hline
\multicolumn{1}{l}{}                        & \multicolumn{12}{c}{Theoretical $R^2 = 0.4$}         \\\hline
                        & \multicolumn{2}{c|}{IID}      & \multicolumn{2}{c|}{Compound} & \multicolumn{2}{c|}{Group}    & \multicolumn{2}{c|}{AR}       & \multicolumn{2}{c|}{Factor}  & \multicolumn{2}{c|}{ExtrCor}  & \multicolumn{2}{c}{SpFactor}\\
                           & TPR            & CP         & TPR           & CP          & TPR            & CP         & TPR            & CP         & TPR           & CP         & TPR           & CP    &TPR      &CP      \\\hline
\multicolumn{1}{l}{RPC1} & 96 & 69 & 32.4 & 0 & 100 & 100 & 99.9 & 99 & 36 & 0 & 97.4 & 80 & 69.0 & 1 \\
\multicolumn{1}{l}{RPC2} & 96 & 69 & 32.4 & 0 & 100 & 100 & 99.9 & 99 & 36.2 & 0 & 97.6 & 81 & 69.0 & 1 \\
\multicolumn{1}{l}{RPC3} & 96 & 69 & 32.3 & 0 & 100 & 100 & 99.9 & 99 & 36 & 0 & 97.3 & 79 & 69.0 & 1 \\
\multicolumn{1}{l}{URPC} & 96 & 69 & 32.6 & 0 & 100 & 100 & 99.9 & 99 & 36.2 & 0 & 97.6 & 81 & 69.0 & 1 \\
\multicolumn{1}{l}{UBITS} & 96.8 & 75 & 35.7 & 0 & 100 & 100 & 99.6 & 96 & 45.3 & 2 & 89.8 & 57 & 78.7 & 7 \\
\multicolumn{1}{l}{HOLP} & 96 & 69 & 32.4 & 0 & 100 & 100 & 99.9 & 99 & 36.4 & 0 & 95.6 & 67 & 69.1 & 1 \\
\multicolumn{1}{l}{FR} & 82.8 & 30 & 11.7 & 0 & 36.8 & 0 & 50.6 & 0 & 13.6 & 0 & 38.6 & 4 & 17.4 & 0 \\
\multicolumn{1}{l}{SIS} & 96.4 & 71 & 41.7 & 0 & 100.0 & 100 & 100.0 & 100 & 20.2 & 0 & 0.0 & 0 & 72.8 & 3 \\\hline
\end{supertabular}
\end{small}
\end{center}                                                                                                                                                    

\clearpage
\begin{center}
\begin{small}
\topcaption{\label{tab:simulationtdist} Simulation results for $t$ distributed errors}
\tablefirsthead{\hline\hline\multicolumn{1}{c}{Method}    & \multicolumn{12}{c}{Correlation Structure}\\}
\tablelasttail{\hline\hline}
\begin{supertabular}{rrr|rr|rr|rr|rr|rr|rr}
\hline
\multicolumn{1}{l}{}                        & \multicolumn{12}{c}{Theoretical $R^2 = 0.1$} \\
\hline
                        & \multicolumn{2}{c|}{IID}      & \multicolumn{2}{c|}{Compound} & \multicolumn{2}{c|}{Group}    & \multicolumn{2}{c|}{AR}       & \multicolumn{2}{c|}{Factor}  & \multicolumn{2}{c|}{ExtrmCor}  & \multicolumn{2}{c}{SparseFactor} \\
                           & TPR            & CP         & TPR           & CP          & TPR            & CP         & TPR            & CP         & TPR           & CP         & TPR           & CP    & TPR & CP \\
\hline
\multicolumn{1}{l}{RPC1}      & 43.6 & 0 & 10.6 & 0 & 80.6 & 40  & 78.7 & 19 & 10.7 & 0 & 67.1 & 2 & 41.1 & 0    \\
\multicolumn{1}{l}{RPC2}      & 43.6 & 0 & 10.6 & 0 & 80.6 & 40  & 78.7 & 19 & 10.7 & 0 & 67.2 & 2 & 41.1 & 0    \\
\multicolumn{1}{l}{RPC3}      & 43.6 & 0 & 10.4 & 0 & 80.4 & 40  & 78.6 & 19 & 10.7 & 0 & 67.0 & 2 & 41.1 & 0    \\
\multicolumn{1}{l}{URPC}      & 43.6 & 0 & 10.6 & 0 & 80.6 & 40  & 78.7 & 19 & 10.8 & 0 & 67.3 & 2 & 41.1 & 0    \\
\multicolumn{1}{l}{UBITS}     & 46.1 & 0 & 17.8 & 0 & 87.3 & 62  & 71.9 & 7  & 17.8 & 0 & 16.8 & 0 & 57.2 & 2 \\
\multicolumn{1}{l}{HOLP}      & 43.6 & 0 & 10.6 & 0 & 80.9 & 42  & 78.6 & 19 & 10.9 & 0 & 59.4 & 0 & 41.2 & 0 \\
\multicolumn{1}{l}{FR}        & 16.2 & 0 & 6.3  & 0 & 21.9 & 0   & 22.7 & 0  & 7.1  & 0 & 8.9  & 0 & 9.0  & 0 \\
\multicolumn{1}{l}{SIS}       & 45.6 & 0 & 16.1 & 0 & 87.7 & 68  & 82.6 & 26 & 19.0 & 0 & 4.2  & 0 & 50.3 & 1 \\
\hline
\multicolumn{1}{l}{}                        & \multicolumn{12}{c}{Theoretical $R^2 = 0.2$} \\
\hline
                        & \multicolumn{2}{c|}{IID}      & \multicolumn{2}{c|}{Compound} & \multicolumn{2}{c|}{Group}    & \multicolumn{2}{c|}{AR}       & \multicolumn{2}{c|}{Factor}  & \multicolumn{2}{c|}{ExtrmCor}  & \multicolumn{2}{c}{SparseFactor} \\
                           & TPR            & CP         & TPR           & CP          & TPR            & CP         & TPR            & CP         & TPR           & CP         & TPR           & CP    & TPR & CP \\
\hline
\multicolumn{1}{l}{RPC1}     & 74.3 & 6 & 17.7 & 0 & 99.1 & 96 & 96.1 & 70  & 14.2 & 0 & 82.3 & 20 & 54.4 & 0    \\
\multicolumn{1}{l}{RPC2}     & 74.3 & 6 & 17.7 & 0 & 99.1 & 96 & 96.1 & 70  & 14.7 & 0 & 82.4 & 20 & 54.5 & 0    \\
\multicolumn{1}{l}{RPC3}     & 74.3 & 6 & 17.7 & 0 & 99.1 & 96 & 96.1 & 70  & 14.1 & 0 & 82.3 & 20 & 54.4 & 0    \\
\multicolumn{1}{l}{URPC}     & 74.3 & 6 & 17.7 & 0 & 99.1 & 96 & 96.1 & 70  & 14.8 & 0 & 82.6 & 20 & 54.5 & 0    \\
\multicolumn{1}{l}{UBITS}    & 75.1 & 6 & 24.3 & 0 & 99.9 & 99 & 92.3 & 45  & 22.3 & 0 & 41.8 & 1 & 68.8 & 1 \\
\multicolumn{1}{l}{HOLP}     & 74.3 & 6 & 17.7 & 0 & 99.1 & 96 & 96.1 & 70  & 14.4 & 0 & 78.2 & 8 & 54.5 & 0    \\
\multicolumn{1}{l}{FR}       & 34.7 & 0 & 8.0  & 0 & 34.1 & 0   & 33.1 & 0  & 7.9  & 0 & 12.8 & 0 & 13.0 & 0    \\
\multicolumn{1}{l}{SIS}      & 76.3 & 8 & 24.9 & 0 & 99.6 & 97 & 98.2 & 84  & 20.9 & 0 & 0.3  & 0 & 61.4 & 0 \\
\hline
\multicolumn{1}{l}{}                        & \multicolumn{12}{c}{Theoretical $R^2 = 0.3$} \\
\hline
                        & \multicolumn{2}{c|}{IID}      & \multicolumn{2}{c|}{Compound} & \multicolumn{2}{c|}{Group}    & \multicolumn{2}{c|}{AR}       & \multicolumn{2}{c|}{Factor}  & \multicolumn{2}{c|}{ExtrmCor}  & \multicolumn{2}{c}{SparseFactor} \\
                           & TPR            & CP         & TPR           & CP          & TPR            & CP         & TPR            & CP         & TPR           & CP         & TPR           & CP    & TPR & CP \\
\hline
\multicolumn{1}{l}{RPC1}    & 88.8 & 30 & 22.7 & 0 & 100.0 & 100 & 98.8 & 89  & 23.7 & 0 & 91.2 & 44 & 63.2 & 1 \\
\multicolumn{1}{l}{RPC2}    & 88.8 & 30 & 22.7 & 0 & 100.0 & 100 & 98.8 & 89  & 23.8 & 0 & 91.1 & 45 & 63.2 & 1 \\
\multicolumn{1}{l}{RPC3}    & 88.8 & 30 & 22.7 & 0 & 100.0 & 100 & 98.8 & 89  & 23.7 & 0 & 91.3 & 45 & 63.2 & 1 \\
\multicolumn{1}{l}{URPC}    & 88.8 & 30 & 22.7 & 0 & 100.0 & 100 & 98.8 & 89  & 23.8 & 0 & 91.6 & 47 & 63.2 & 1 \\
\multicolumn{1}{l}{UBITS}   & 89.2 & 35 & 28.4 & 0 & 100.0 & 100 & 98.1 & 85  & 31.6 & 0 & 62.6 & 5 & 74.8 & 5 \\
\multicolumn{1}{l}{HOLP}    & 88.8 & 30 & 22.7 & 0 & 100.0 & 100 & 98.8 & 89  & 24.0 & 0 & 88.8 & 35 & 63.2 & 1 \\
\multicolumn{1}{l}{FR}      & 57.3 & 3  & 7.6  & 0 & 36.3 & 0   & 42.9 & 0  & 10.3 & 0 & 17.9 & 0 & 15.8 & 0 \\
\multicolumn{1}{l}{SIS}     & 89.2 & 33 & 30.8 & 0 & 100.0 & 100 & 98.9 & 90  & 18.8 & 0 & 0.0  & 0 & 68.4 & 3 \\
\hline
\multicolumn{1}{l}{}                        & \multicolumn{12}{c}{Theoretical $R^2 = 0.4$} \\
\hline
                        & \multicolumn{2}{c|}{IID}      & \multicolumn{2}{c|}{Compound} & \multicolumn{2}{c|}{Group}    & \multicolumn{2}{c|}{AR}       & \multicolumn{2}{c|}{Factor}  & \multicolumn{2}{c|}{ExtrmCor}  & \multicolumn{2}{c}{SparseFactor} \\
                           & TPR            & CP         & TPR           & CP          & TPR            & CP         & TPR            & CP         & TPR           & CP         & TPR           & CP    & TPR & CP \\
\hline
\multicolumn{1}{l}{RPC1}    & 95.6 & 65 & 35.2 & 0 & 100.0 & 100 & 100.0 & 100 & 29.3 & 0  & 96.8 & 74 & 62.9 & 2 \\
\multicolumn{1}{l}{RPC2}    & 95.6 & 65 & 35.3 & 0 & 100.0 & 100 & 100.0 & 100 & 29.1 & 0  & 96.8 & 74 & 63.0 & 2 \\
\multicolumn{1}{l}{RPC3}    & 95.6 & 65 & 35.2 & 0 & 100.0 & 100 & 100.0 & 100 & 28.9 & 0  & 96.9 & 75 & 62.9 & 2 \\
\multicolumn{1}{l}{URPC}    & 95.6 & 65 & 35.4 & 0 & 100.0 & 100 & 100.0 & 100 & 29.4 & 0  & 96.9 & 75 & 63.0 & 2 \\
\multicolumn{1}{l}{UBITS}   & 95.6 & 68 & 38.9 & 0 & 100.0 & 100 & 99.8 & 98 & 37.8 & 1  & 88.2 & 52 & 74.1 & 5 \\
\multicolumn{1}{l}{HOLP}    & 95.6 & 65 & 35.1 & 0 & 100.0 & 100 & 100.0 & 100 & 29.7 & 0  & 94.7 & 61 & 63.2 & 2 \\
\multicolumn{1}{l}{FR}      & 83.1 & 33 & 12.7 & 0 & 37.4 & 0  & 51.1 & 0  & 12.9 & 0  & 37.0 & 4  & 16.4 & 0 \\
\multicolumn{1}{l}{SIS}     & 95.6 & 65 & 41.1 & 0 & 100.0 & 100 & 100.0 & 100 & 22.1 & 0  & 0.0  & 0  & 66.1 & 4 \\
\hline
\end{supertabular}

\end{small}
\end{center}                                                                                                 
\clearpage
\begin{center}
\begin{small}
\topcaption{\label{tab:simulationexp} Simulation results for exponentially distributed errors}
\tablefirsthead{\hline\hline\multicolumn{1}{c}{Method}    & \multicolumn{12}{c}{Correlation Structure}\\}
\tablelasttail{\hline\hline}

\begin{supertabular}{rrr|rr|rr|rr|rr|rr|rr}
\hline\multicolumn{1}{l}{}                        & \multicolumn{12}{c}{Theoretical $R^2 = 0.1$}
                                                \\\hline
                        & \multicolumn{2}{c|}{IID}      & \multicolumn{2}{c|}{Compound} & \multicolumn{2}{c|}{Group}    & \multicolumn{2}{c|}{AR}       & \multicolumn{2}{c|}{Factor}  & \multicolumn{2}{c|}{ExtrmCor}  &
                        \multicolumn{2}{c}{SparseFactor}\\
                           & TPR            & CP         & TPR           & CP          & TPR            & CP         & TPR            & CP         & TPR           & CP         & TPR           & CP    &TPR      &CP      \\\hline
\multicolumn{1}{l}{RPC1}      & 45.6 & 0    & 10.7 & 0 & 82.3 & 44 & 77.2 & 18 & 9.3 & 0 & 67.1 & 2 & 41.4 & 0    \\
                         
\multicolumn{1}{l}{RPC2}      & 45.6 & 0    & 10.7 & 0 & 82.3 & 44 & 77.2 & 18 & 9.3 & 0 & 67.2 & 2 & 41.4 & 0    \\
                        
\multicolumn{1}{l}{RPC3}      & 45.6 & 0    & 10.7 & 0 & 82.2 & 44 & 77.2 & 18 & 9.3 & 0 & 66.9 & 2 & 41.3 & 0    \\
                        
\multicolumn{1}{l}{URPC}      & 45.6 & 0    & 10.7 & 0 & 82.3 & 44 & 77.2 & 18 & 9.4 & 0 & 67.4 & 2 & 41.4 & 0    \\
                              
\multicolumn{1}{l}{UBITS}  & 48.3 & 1 & 18.0  & 0 & 86.7 & 66 & 73.3 & 12 & 17.2 & 0 & 18.8 & 0    & 58.1 & 1 \\
                    
\multicolumn{1}{l}{HOLP}     & 45.7 & 0    & 10.7 & 0 & 82.3 & 44 & 77.2 & 18 & 9.4 & 0 & 61.0  & 0    & 41.6 & 0    \\
                        
\multicolumn{1}{l}{FR}        & 15.1 & 0    & 6.3 & 0 & 22.4 & 0    & 23.0  & 0    & 6.9 & 0 & 8.7 & 0    & 8.9 & 0    \\
                       
\multicolumn{1}{l}{SIS}                             & 47.0  & 0    & 17.4 & 0 & 88.8 & 60  & 82.8 & 25 & 17.6 & 0 & 1.3 & 0    & 51.0  & 0   \\\hline

\end{supertabular}

\end{small}
\end{center}                                                                                                                                






\end{document}